\tikzset{
	declare function={
		normcdf(\x,\m,\s)=1/(1 + exp(-0.07056*((\x-\m)/\s)^3 - 1.5976*(\x-\m)/\s));
		zeroone(\x)= (\x<=-2) * (\x*\x + 6*\x + 8)   +
		and(\x>-2, \x<=1) * (2 - \x - \x*\x)     +
		and(\x>1,  \x<=2) * (6 - 8*\x + 2*\x*\x) +
		(\x>2) * (-10 + 6*\x - \x*\x);
	}
}
	\newcommand{\mypm}{\mathbin{\mathpalette\@mypm\relax}}
	\newcommand{\@mypm}[2]{\ooalign{%
			\raisebox{.1\height}{$#1+$}\cr
			\smash{\raisebox{-.6\height}{$#1-$}}\cr}}
\def\beq{\begin{equation}\begin{aligned}[b]}
\def\eeq{\end{aligned}\end{equation}}
\begin{document}

%\title{Optimal Estimation of High-dimensional Covariate-adjusted Cut-point for 1-bit measurements}
%\title{ Estimation of High-dimensional Covariate-adjusted Cut-point for 1-bit measurements: Fast Algorithm and Minimax Rates}
\title{Treatment Effect Estimation with Unobserved and Heterogeneous Confounding Variables}

\author{Kevin Jiang\thanks{Department of Statistics and Data Science, Cornell University, Ithaca, NY 14850, USA; e-mail: \texttt{kcj42@cornell.edu}}~~~~~Yang Ning\thanks{Department of Statistics and Data Science, Cornell University, Ithaca, NY 14850, USA; e-mail: \texttt{yn265@cornell.edu}.}
}
\date{\today}

\maketitle

\vspace{-0.3in}

\begin{abstract}
The estimation of the treatment effect is often biased in the presence of unobserved confounding variables which are commonly referred to as hidden variables. Although a few methods have been recently proposed to handle the effect of hidden variables, these methods often overlook the possibility of any interaction between the observed treatment variable and the unobserved covariates. In this work, we address this shortcoming by studying a multivariate response regression problem with both unobserved and heterogeneous confounding variables of the form $\bY=\bA^T\bX+\bB^T\bZ+ \sum_{j=1}^{p} \bC^T_j X_j \bZ + \bE$, where $\bY \in \RR^m$ are $m$-dimensional response variables, $\bX \in \RR^p$ are observed covariates (including the treatment variable), $\bZ \in \RR^K$ are $K$-dimensional unobserved confounders, and $\bE \in \RR^m$ is the random noise. Allowing for the interaction between $X_j$ and $\bZ$ induces the heterogeneous confounding effect. Our goal is to estimate the unknown matrix $\bA$, the direct effect of the observed covariates or the treatment on the responses. To this end, we propose a new debiased estimation approach via SVD to remove the effect of unobserved confounding variables. The rate of convergence of the estimator is established under  both the homoscedastic and heteroscedastic noises. We also present several simulation experiments and a real-world data application to substantiate our findings.  
    
\end{abstract}
%\keyword
\noindent {\bf Keywords:} {\small Hidden variables, interaction, multivariate response regression, treatment effect estimation,  principal component analysis}

\section{Introduction}
Treatment effect estimation in the presence of unobserved confounding variables is a very challenging problem, arising in many areas including statistics, biology, computer science and economics. With some additional domain knowledge, such as the existence of instrumental variables or negative controls, the effect of unobserved confounding variables can be removed, leading to consistent estimation of the treatment effect; see  \cite{wooldridge2015introductory,lipsitch2010negative} for a review. However, if such information is unavailable, how to correct for the bias due to the unobserved confounding variables is largely unexplored. 

In this context, a class of methods known as surrogate variable analysis (SVA) has been proposed to account for the hidden variables (e.g., batch effect) in the analysis of genomics data (e.g. \cite{alter2000singular}, \cite{leek2007capturing}, \cite{sun2012multiple}). These methods relax the assumptions commonly used in the literature on instrumental variables or negative controls, but still require substantial apriori knowledge of the data or impose a strict structure on the underlying model. For example, \cite{gagnon2012using} required knowing a null set of features in the exposure variables, \cite{mckennan2019accounting} required row-wise sparsity of the non-null features, and \cite{wang2017confounder} imposed a linear causal relationship of the observed variables and the hidden variables. More recently, \cite{bing2022adaptive} extended the methods to deal with the model with high-dimensional features. However, all of these methods ignore the possibility of any interaction between the observed treatment variable and the hidden variables, which appears frequently in the presence of heterogeneous treatment effect (e.g., the treatment effect may vary according to the value of the confounding variables). Failing to account for this structure may lead to model specification, so that the validity and interpretability of statistical findings may be severely limited. 

In this paper, we consider the following model 
\begin{align}\label{eq:first_model}
    \bY=\bA^T\bX+\bB^T\bZ+ \sum_{j=1}^{p} \bC^T_j X_j \bZ + \bE, 
\end{align}
where $\bY \in \RR^m$ are the response variables, $\bX \in \RR^p$ are the observed covariates (including the treatment variable and observed confounders), $\bZ \in \RR^K$ are the unobserved covariates or hidden variables, and $\bE \in \RR^m$ are the random errors. The matrices $\bA, \bB, \bC_1,...,\bC_p$ are unknown parameters. The model  (\ref{eq:first_model}) allows the hidden variables to interact through the $\bC^T_j X_j \bZ$ terms for $j=1,\dots,p$ in a multiplicative manner. As a special case of (\ref{eq:first_model}), consider $p=1$ and take $X_1\in\{0,1\}$ to be a binary treatment variable. Then the conditional average treatment effect (CATE) can be shown to be $\bA^T+\bC^T_j \bZ$, which depends on the value of unobserved confounding variable $\bZ$. Thus, model  (\ref{eq:first_model}) provides a parsimonious way to account for the heterogeneous treatment effect. 

Given $n$ i.i.d samples $(\bY_i,\bX_i)_{i=1}^n$, our goal is to estimate the unknown matrix $\bA\in\RR^{p\times m}$, the association between $\bX$ and $\bY$ in the presence of hidden variables, which can be also interpreted as the direct effect of the treatment on the response in a causal inference framework \citep{bing2022adaptive}. However, in general, $\bA$ is not identifiable by only observing $(\bY, \bX)$, as $\bZ$ can be correlated with $\bX$ in an arbitrary way. To address this problem, we construct a parameter $\bTheta$ that approximates $\bA$ by teasing out the effect induced by the hidden variables. In particular, we propose a debiased estimator by projecting the response variables to an appropriate singular vector space. Theoretically, we characterize the stochastic error and approximation error of our debiased estimator under both the homoscedastic and the more general heteroscedastic and correlated noises.

%considering the partial direct effect induced by the hidden effects $\bB, \bC_j, j=1,\dots,p$. More formally, write $\bD^T := (\bB^T, \bC_1^T...,\bC_p^T)\in\RR^{m\times (p+1)K}$ as the concatenation of the coefficient matrices of the hidden variables. Now, let $\bP_{\bD}=\bD^T(\bD\bD^T)^{-1}\bD$ be its projection matrix and $\bTheta^T := \bP^{\perp}_{\bD} \bA^T$ be the projection of $\bA$ onto the orthogonal complement of the row space of $\bD$. Since $\bTheta^T \bX$ is orthogonal to both $\bB^T \bZ$ and $\bC_j^T X_j \bZ$ for all $j=1,\dots,p$, the parameter $\bTheta$ measures the effect of the observed covariates that can not be explained by the hidden covariates and is hence referred to here as the partial direct effect of the observed covariates on the response. 

%Given $n$ i.i.d samples $(\bY_i,\bX_i)_{i=1}^n$, our goal is to ultimately estimate and obtain convergence rates for the unknown direct effect of $\bX$ on $\bY$ in the presence of hidden variables through the estimation of $\bTheta$.

%\subsection{Main contributions and outline}

The paper is organized as follows. We first give a detailed estimation algorithm in the homoscedastic setting in Section \ref{sec:method}. Section \ref{sec:theoretical} presents our main theoretical result concerning the convergence rate of our debiased estimator. We then extend the method to the heteroscedastic setting in Section \ref{sec:heteroscedastic} by proposing a modified version of our algorithm and giving an adjusted convergence rate. Finally, Sections \ref{sec:simulation} and \ref{sec:dataapplication} give simulation results and a real world data application to high-throughput microarray data.

%In brief, our algorithm recovers the projection matrix $\bP_{\bD}$ by computing estimates of the singular value decomposition of the coefficient matrices $\bB^T, \bC^T_j$ for $j=1,\dots,p$. This relies on the key observation that the space spanned by the first $K$ right singular values of a rank $K$ matrix $\bM$ coincides with the first $K$ eigenvectors of expressions of the form $\bM^T \Sigma_{\bQ} \bM$ for a full rank $K$ covariance matrix $\bQ$.

% In particular, Theorem \ref{thm:thm 1} states that our estimator obtains a standard parametric convergence rate with an additive penalty under standard collinearity assumptions on the data and an eigenvalue gap condition on the coefficient matrices.

%In fact, we prove the convergence rate of our parameter in a more general setting where the covariance matrix of the noise may not necessarily be diagonal (see Theorem \ref{thm:thm 2}).

%Our simulation studies demonstrate that our algorithm (i) recovers the true underlying coefficient matrix $\bA$ (Section \ref{sec:rsse}), (ii) predicts both the adjusted response $\bP_{\bD}^{\perp} \bY$ and true response $\bY$ (Section \ref{sec:testmses}), and (iii) estimates the number of hidden variables $K$ (Section \ref{sec:k-selection}) better than non-interaction based methods under a variety of settings. Likewise, for the genomics data analysis, the estimate $\hat{\bTheta}^T \bX$ achieves higher prediction accuracy than methods that ignore additive interaction between observed and hidden covariates.

\subsection{Notation}
For any set $\cS$, we write $|\cS|$ for its cardinality. For any vector $\bv\in \RR^d$, we define its $\ell_q$ norm as $\|\bv\|_q = (\sum_{j=1}^d |\bv_j|^q)^{1/q}$ for some real number $q\ge 0$. For any matrix $\bM \in \RR^{d_1 \times d_2}$, we denote $\|\bM\|_{op}$ and $\|\bM\|_{F}$ as the operator and Frobenius norm, respectively. For any square matrix $\bM$, we also write $\lambda_k(\bM)$ as its $k$th largest eigenvalue. For any two sequences $a_n$ and $b_n$, we write $a_n \lesssim b_n$ (or $a_n=\mathcal{O}(b_n)$) if there exists some positive constant $C$ such that $a_n \le Cb_n$ for any $n$. We let $a_n \asymp b_n$ stand for $a_n\lesssim b_n$ and $b_n \lesssim a_n$. Denote $a\vee b=\max (a,b)$ and $a\wedge b=\min(a,b)$. 

\section{Debiased Estimator via SVD}\label{sec:method}
Recall that in model (\ref{eq:first_model}), $\bY \in \RR^m$ are the response variables, $\bX \in \RR^p$ are the observed covariates and $\bZ \in \RR^K$ are the hidden variables, where the number of hidden variables $K$ is unknown and is assumed to be much less than $m$. In addition, we assume the random noise $\bE$ is independent of $\bX$ and $\bZ$. In this section, we focus on the setting with homoscedastic errors, i.e., $\Cov(\bE) = \sigma^2 \bI_m$, where $\bI_m$ is an identity matrix.  The extension to heterogeneous and correlated errors is studied in Section \ref{sec:heteroscedastic}. 

To motivate the proposed method, write $\bW=\bZ-\bpsi^T \bX$, where $\bpsi := \{ \EE(\bX \bX^T)\}^{-1} \EE(\bX\bZ^T) \in \RR^{p \times K}$ is obtained by the $\mathcal{L}_2$ projection of $\bZ$ onto $\bX$. Note that we do not require $\bZ$ and $\bX$ to be linearly dependent when defining $\bW$. Using this decomposition of $\bZ$, we may rewrite (\ref{eq:first_model}) as 
\begin{align}\label{model}
\bY &= \bA^T \bX + (\bB^T \bW + \bB^T \bpsi^T \bX) + \sum_{j=1}^{p} \bC_j^T X_j (\bW + \bpsi^T \bX) + \bE \notag \\
&= \sum_{j=1}^{p} ( \bA_{(\cdot,j)}^T + \bB^T \bpsi_{(\cdot,j)}^T ) X_j + \sum_{j=1}^{p} \bC_j^T X_j \cdot \sum_{k=1}^{p} \bpsi_{(\cdot,k)}^T X_k + \bB^T \bW + \sum_{j=1}^{p} \bC_j^T X_j \bW + \bE \notag \\
&= \sum_{j=1}^{p} \bL_{1j}^T X_j + \sum_{1\leq j\leq k\leq p} \bL_{2,jk}^T X_j X_k + \bepsilon,
\end{align}
where $\bL_{1j}^T =  \bA_{(\cdot,j)}^T + \bB^T \bpsi_{(\cdot,j)}^T\in\RR^m$, $\bL_{2,jk}^T = \bC_j^T \bpsi_{(\cdot,k)}^T+\bC_k^T \bpsi_{(\cdot,j)}^T\in \RR^m$ for $j\neq k$ and $\bL_{2,jk}^T = \bC_j^T \bpsi_{(\cdot,k)}^T\in \RR^m$ for $j=k$, and $\bepsilon = \bB^T \bW + \sum_{j=1}^{p} \bC_j^T X_j \bW + \bE$. Thus, given $n$ i.i.d copies of $(\bX, \bY)$, we can estimate the coefficient matrices $\bL_{1j}, \bL_{2,jk}$ from a linear regression with all the linear and pairwise interactions among $\bX$. 

The second step is to estimate the covariance matrix of the residuals $\bepsilon$, which has the following structures
\begin{align}\label{covariance residual}
\EE(\bepsilon\bepsilon^T|\bX)&=\EE[(\bB^T \bW + \bE)(\bB^T \bW + \bE)^T|\bX] + \sum_{1\leq j,k\leq p} \EE(\bC_j^T \bW\bW^T\bC_k|\bX) X_jX_k \notag \\
&~~~+\sum_{j=1}^{p} \EE(\bC_j^T \bW(\bB^T \bW + \bE)^T|\bX) X_j + \sum_{j=1}^{p} \EE((\bB^T \bW + \bE)\bW^T\bC_j|\bX) X_j \notag \\
&=\bphi_{\bB} + \sum_{1\leq j\leq k\leq p} \bphi_{(\bC_j,\bC_k)} X_jX_k + \sum_{j=1}^{p} \bphi_{(\bB, \bC_j)} X_j
\end{align}
where we use the fact that $\bE$ is independent of $\bX, \bZ$, and $\bphi_{\bB} = \bB^T \Cov(\bW|\bX) \bB + \Cov(\bE)$, $\bphi_{(\bC_j,\bC_k)} = \bC_j^T \Cov(\bW|\bX) \bC_k+\bC_k^T \Cov(\bW|\bX) \bC_j$ for $j\neq k$ and $\bphi_{(\bC_j,\bC_k)} = \bC_j^T \Cov(\bW|\bX) \bC_k$ for $j=k$, and $\bphi_{(\bB,\bC_j)}=\bC_j^T \Cov(\bW|\bX) \bB+\bB^T\Cov(\bW|\bX)\bC_j$. 

If $\bSigma_W:=\Cov(\bW|\bX)$ does not depend on $\bX$, we can regress the estimated covariance matrix of the residuals on $\bX$ to estimate $(\bphi_{\bB}, \bphi_{(\bC_j,\bC_k)}, \bphi_{(\bB,\bC_j)})$ for $1\leq  j,k\leq p$. For notational simplicity, we write $\bphi_{(\bC_j,\bC_j)}$ as $\bphi_{\bC_j}$. Suppose $\bphi_{\bB}$ and $\bphi_{\bC_j}$ for $1\leq j\leq p$ are known (or well estimated via least square estimation). Under some conditions detailed in Section \ref{sec:theoretical}, we can recover the right singular space of $\bB$ and $\bC_j$ for $1\leq j\leq p$. With some simple algebra, this gives us the projection matrix $\bP_{\bD}=\bD^T(\bD\bD^T)^{-1}\bD$, where $\bD^T := (\bB^T, \bC_1^T...,\bC_p^T)\in\RR^{m\times (p+1)K}$. Define $\bP_{\bD}^{\perp}=\bI_m-\bP_{\bD}$. 

The key step in our method is to project the $m$-dimensional response $\bY$ to the orthogonal complement of the right singular space of $\bD$. Multiplying $\bP_{\bD}^{\perp}$ on both sides of equation  (\ref{eq:first_model}), we get 
\begin{align}
\bP_{\bD}^{\perp}\bY&=\bP_{\bD}^{\perp}\bA^T\bX+\bP_{\bD}^{\perp}\bB^T\bZ+ \sum_{j=1}^{p} \bP_{\bD}^{\perp}\bC^T_j X_j \bZ + \bP_{\bD}^{\perp}\bE  \nonumber\\
&=\bP_{\bD}^{\perp}\bA^T\bX+ \bP_{\bD}^{\perp}\bE,\label{eq_projection}
\end{align}
where we use the fact that $\bP_{\bD}^{\perp}\bB^T=\bP_{\bD}^{\perp}\bC^T_j=0$ by the definition of the projection matrix. Thus, by leveraging the singular vector space of $\bB$ and $\bC_j$, we eliminate the effect of hidden variables $\bZ$ from the linear regression as shown in  (\ref{eq_projection}). However, in this case, we can only recover the coefficient matrix $\bTheta^T := \bP_{\bD}^{\perp} \bA^T$, which differs from the parameter of interest $\bA^T$ in general. The difference between $\bTheta$ and $\bA$ leads to an approximation error in our theoretical analysis. In particular, under the conditions in Section \ref{sec:theoretical}, we show that the approximation error is asymptotically ignorable.

%and $\text{rank}(\bSigma_W) = K$. 

%From the estimates $\hat{\bphi}_{\bB}$ and $\hat{\bphi}_{\bC_j}$, we compute the first $K$ eigenvectors $\hat \bU_{\bB}\in\RR^{m\times K}$ and $\hat \bU_{\bC_j}\in\RR^{m\times K}$ by obtaining the first $K$ eigenvectors of $\hat{\bphi}_{\bB}$ and $\hat{\bphi}_{\bC_j}$ for $j=1,\dots,p$. We then apply singular value decomposition to a concatenated matrix of $(\hat{\bU}_{\bB},\hat{\bU}_{\bC_1},...,\hat{\bU}_{\bC_p})$ to estimate the projection matrix $\bP_{\bD}=\bD^T(\bD\bD^T)^{-1}\bD$, where $\bD^T := (\bB^T, \bC_1^T...,\bC_p^T)\in\RR^{m\times (p+1)K}$. Finally, we estimate $\bTheta^T := \bP_{\bD}^{\perp} \bA^T \approx \bA^T$ by a linear regression once again. 

For clarity, we summarize our debiased estimation procedure in Algorithm \ref{algo:homoscedastic}. Note that the algorithm requires the number of hidden variables $K$ as the input. The discussion on selection of $K$ is deferred to Section \ref{sec:k-selection}. 

\begin{algorithm}
\caption{Debiased estimator with homoscedastic noise $\Cov(\bE) = \sigma^2 \bI_m$}\label{algo:homoscedastic}

 \textbf{Require: } $n$ i.i.d data $(\bY_i,\bX_i)_{i=1}^n$, rank $K$; 
 
 1. Solve $(\hat{\bL}_{1j}, \hat{\bL}_{2,jk})_{1\leq j,k\leq p} = \arg \min \sum_{i=1}^{n}\| \bY_i - \sum_{j=1}^{p} \bL_{1j}^T X_{ji} - \sum_{1\leq j\leq k\leq p} \bL_{2,jk}^T X_{ji}X_{ki}\|_2^2$. 
 
 2. Obtain $\hat\Sigma_{\bepsilon_i} = \hat{\bepsilon}_i \hat{\bepsilon}_i^T \in \RR^{m \times m}$ where $\hat{\bepsilon}_i = \bY_i - \sum_{j=1}^{p} \hat\bL_{1j}^T X_{ji} - \sum_{1\leq j\leq k\leq p} \hat\bL_{2,jk}^T X_{ji}X_{ki}$. 
 
 3. Solve \begin{align*}
     (\hat\bphi_{\bB}, \hat\bphi_{(\bC_j,\bC_k)}, \hat\bphi_{(\bB,\bC_j)})_{1\leq j,k\leq p} = \arg \min \sum_{i=1}^{n} \|\hat\Sigma_{\bepsilon_i} - \bphi_{\bB} - \sum_{1\leq j\leq k\leq p} \bphi_{(\bC_j,\bC_k)} X_{ji}X_{ki} - \sum_{j=1}^{p} \bphi_{(\bB, \bC_j)} X_{ji}\|_F^2.
 \end{align*} 
 
 4. Compute $\hat \bU_{\bB}\in\RR^{m\times K}$ and $\hat \bU_{\bC_j}\in\RR^{m\times K}$, the first $K$ eigenvectors of $\hat{\bphi}_{\bB}$ and $\hat{\bphi}_{\bC_j}:=\hat\bphi_{(\bC_j,\bC_j)}$. Via SVD, obtain $\hat{\bU}_{\bD} \in \RR^{m \times (p+1) K}$, the first $(p+1) K$ left singular vectors of $(\hat{\bU}_{\bB},\hat{\bU}_{\bC_1},...,\hat{\bU}_{\bC_p})$. Then estimate the projection matrix $\hat{\bP}_{\bD} = \hat{\bU}_{\bD} \hat{\bU}_{\bD}^T$, and $\hat{\bP}_{\bD}^{\perp}=\bI_m-\hat{\bP}_{\bD}$. 
 
 5. Obtain the debiased estimator $\hat{\bTheta}$ by solving \begin{align}\label{eq:step 5}
     \hat{\bTheta} = \arg \min_{\Theta^T \in \RR^{m\times p}} \sum_{i=1}^{n} \| \hat{\bP}_{\bD}^{\perp} \bY_{i} - \bTheta^T \bX_i\|_2^2.
 \end{align}

\end{algorithm}

\section{Theoretical Results}\label{sec:theoretical}
In this section, we establish the rate of convergence of the proposed debiased estimator. In particular, we focus on the asymptotic regime with $p$ fixed  and $K,m,n\rightarrow\infty$. 
%We will assume $p$ is fixed and allow $K,m,n$ to increase to infinity.

\begin{assumption}\label{ass_1}
Assume that $\bSigma_W:=\Cov(\bW|\bX)$ does not depend on $\bX$, the matrix $\bSigma_W$ has full rank and the noise is homoscedastic, i.e., $\Cov(\bE) = \sigma^2 \bI_m$. 
\end{assumption}

\begin{assumption}\label{ass_2}
Denote $\bar \bX=(X_1,...,X_p, X_1^2, X_1X_2,...,X_p^2)^T\in\RR^{p(p+1)/2}$. Assume that the matrix $\EE(\bar \bX\bar \bX^T)$ is positive definite. We further assume $\max_{1\leq k\leq m}\EE(\bar \bX^T\bar \bX\epsilon^2_k)$ and $\max_{1\leq j,k\leq m}\EE(\bar \bX^T\bar \bX\phi^2_{jk})$ are bounded by a constant, where $\epsilon_k$ is the $k$th entry of $\bepsilon = \bB^T \bW + \sum_{j=1}^{p} \bC_j^T X_j \bW + \bE$, and $\phi_{jk}$ is the $(j,k)$th entry of $\bphi=\bepsilon\bepsilon^T-\bphi_{\bB} - \sum_{1\leq j\leq k\leq p} \bphi_{(\bC_j,\bC_k)} X_jX_k - \sum_{j=1}^{p} \bphi_{(\bB, \bC_j)} X_j$. Finally, for $j=1,..,p$, $\EE(X_j^6)$ is finite.  
\end{assumption}

\begin{assumption}\label{ass_3}
Let $\lambda_B$ and $\lambda_{C_j}$ be the $K$th largest eigenvalues of $\bB^T \bSigma_W \bB$ and $\bC_j^T \bSigma_W \bC_j$, respectively. Assume $\lambda_B \asymp \lambda_{C_j} \asymp m$ for $j=1,..,p$. The matrix $\bD^T := (\bB^T, \bC_1^T...,\bC_p^T)\in\RR^{m\times (p+1)K}$ has rank $(p+1)K$, and the $(p+1)K$th largest eigenvalue of $\bD^T \bD$ denoted by $\lambda_D$ satisfies $\lambda_D\asymp m$.
\end{assumption}

Assumption \ref{ass_1} is already required in Section \ref{sec:method} in order to develop the proposed method. Assumption \ref{ass_2} is a standard moment condition that guarantees the desired rate for the least square type estimators in Algorithm \ref{algo:homoscedastic}. Finally, Assumption \ref{ass_3} is known as the pervasiveness assumption in the factor model literature for identification and consistent estimation of the right singular space of $\bB$ and $\bC_j$; see \cite{bai2003inferential,fan2013large}. In particular, $\lambda_B \asymp m$ holds if the smallest and largest eigenvalues of $\bSigma_W$ are bounded away from 0 and infinity by some constants, and the columns of $\bB$ are i.i.d. copies of a $K$-dimensional sub-Gaussian random vector whose covariance matrix has bounded eigenvalues. In this assumption, we also require that $\bD^T$ has full column rank (and $(p+1)K\leq m$), which is used to construct the estimated singular vectors $\hat{\bU}_{\bD}$ in  Algorithm \ref{algo:homoscedastic}. 

We are now ready to present the main result concerning the convergence rate of our estimator $\hat{\bTheta}$ to the true coefficient matrix $\bA$.

\begin{theorem}\label{thm:thm 1}
    Under Assumptions \ref{ass_1}-\ref{ass_3}, the estimator $\hat{\bTheta}$ from Algorithm \ref{algo:homoscedastic} satisfies:
    \begin{equation}
    \frac{1}{m}\| \hat{\bTheta} - \bA \|_F^2 = \mathcal{O}_p\Big(\frac{1}{n}+\frac{K}{m}\eta\Big),
    \end{equation}
where $\eta=\frac{1}{Km}\|\bD \bA^T\|_F^2$.
\end{theorem}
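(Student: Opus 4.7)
The first step is to decompose $\hat{\bTheta} - \bA = (\hat{\bTheta} - \bTheta) + (\bTheta - \bA)$, where $\bTheta^T := \bP_{\bD}^{\perp} \bA^T$ is the population target implicitly solved by Step~5. The approximation piece is deterministic: since $\bA^T - \bTheta^T = \bP_{\bD} \bA^T = \bD^T(\bD\bD^T)^{-1}\bD \bA^T$, the trace inequality gives $\|\bA - \bTheta\|_F^2 = \mathrm{tr}(\bA\bD^T(\bD\bD^T)^{-1}\bD\bA^T) \le \lambda_D^{-1}\|\bD\bA^T\|_F^2$, and combining with $\lambda_D \asymp m$ from Assumption~\ref{ass_3} yields $\tfrac{1}{m}\|\bA-\bTheta\|_F^2 \lesssim \tfrac{1}{m^2}\|\bD\bA^T\|_F^2 = \tfrac{K}{m}\eta$, exactly the second term of the stated rate.

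For the stochastic piece $\hat{\bTheta} - \bTheta$, I would write the closed-form LS solution of Step~5 as $\hat{\bTheta}^T = \hat{\bP}_{\bD}^{\perp} \bigl(\tfrac{1}{n}\sum_i \bY_i \bX_i^T\bigr) S_n^{-1}$ with $S_n := \tfrac{1}{n}\sum_i \bX_i\bX_i^T$. Substituting the true model and using the identities $\bP_{\bD}^{\perp}\bB^T = \bP_{\bD}^{\perp}\bC_j^T = 0$, the difference decomposes into three terms: (i) $(\hat{\bP}_{\bD}^{\perp}-\bP_{\bD}^{\perp})\bA^T$, (ii) $(\hat{\bP}_{\bD}^{\perp}-\bP_{\bD}^{\perp})$ multiplied by sample averages of the $\bB^T\bZ$ and $\bC_j^T X_j \bZ$ pieces (each $O_p(1)$), and (iii) the pure noise term $\hat{\bP}_{\bD}^{\perp}\bigl(\tfrac{1}{n}\sum_i \bE_i\bX_i^T\bigr)S_n^{-1}$. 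Term (iii) contributes $O_p(1/n)$ after normalization by $1/m$, via the homoscedastic $\Cov(\bE)=\sigma^2\bI_m$ and a standard averaging argument. Terms (i) and (ii) both reduce to bounding $\|\hat{\bP}_{\bD}-\bP_{\bD}\|_{op}$; for (i), splitting $\bA^T = \bP_{\bD}\bA^T + \bP_{\bD}^{\perp}\bA^T$ lets the $\bP_\bD$-aligned part absorb into the $K\eta/m$ rate, while the $\bP_\bD^\perp$-aligned part contributes only a higher-order cross term.

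The heart of the argument is therefore to establish $\|\hat{\bP}_{\bD}-\bP_{\bD}\|_{op} = O_p(n^{-1/2})$, which I would prove by propagating errors through Algorithm~\ref{algo:homoscedastic}. Step~1's LS gives, under Assumption~\ref{ass_2}, $\|\hat{\bL}_{1j}-\bL_{1j}\|_2$ and $\|\hat{\bL}_{2,jk}-\bL_{2,jk}\|_2$ of order $O_p(\sqrt{m/n})$ in aggregate, so the plug-in residuals $\hat\bepsilon_i$ are close enough to $\bepsilon_i$ to ensure Step~3's matrix-valued LS yields $\|\hat{\bphi}_{\bB}-\bphi_{\bB}\|_{op}, \|\hat{\bphi}_{\bC_j}-\bphi_{\bC_j}\|_{op} = O_p(m/\sqrt{n})$; the $m$ scaling is unavoidable because each summand $\bepsilon_i\bepsilon_i^T$ has operator norm $\|\bepsilon_i\|_2^2 = O_p(m)$. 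By Assumption~\ref{ass_3}, $\bphi_{\bB} = \bB^T\bSigma_W\bB + \sigma^2\bI_m$ and $\bphi_{\bC_j} = \bC_j^T\bSigma_W\bC_j$ have an eigengap of order $m$ at index $K$, so Davis--Kahan gives $\|\hat{\bU}_{\bB}\hat{\bU}_{\bB}^T - \bU_{\bB}\bU_{\bB}^T\|_{op} = O_p(n^{-1/2})$, and likewise for each $\bC_j$. A second Wedin-type inequality applied to the SVD of $(\hat{\bU}_{\bB},\hat{\bU}_{\bC_1},\dots,\hat{\bU}_{\bC_p})$, exploiting $\lambda_D\asymp m$, propagates this rate to $\|\hat{\bP}_{\bD}-\bP_{\bD}\|_{op} = O_p(n^{-1/2})$.

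The main technical obstacle is the clean propagation of operator-norm bounds through the two matrix-valued LS regressions (Steps~1 and~3). In particular, one must verify that $\|\hat{\bphi}_{\bB} - \bphi_{\bB}\|_{op}$ scales as $m/\sqrt{n}$ rather than a loose $m^{3/2}/\sqrt n$, because only then does dividing by the $\Theta(m)$ pervasiveness eigengap deliver the desired $n^{-1/2}$ subspace rate. A secondary subtlety is that the Wedin-type aggregation in Step~4 must preserve the rank-$(p+1)K$ structure of $\bD^T$; under Assumption~\ref{ass_3} this follows from $\lambda_D\asymp m$, but care is needed when handling the cross-block interactions between $\hat{\bU}_{\bB}$ and the $\hat{\bU}_{\bC_j}$.
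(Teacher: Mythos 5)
Your proposal follows essentially the same route as the paper's proof: the same split $\hat{\bTheta}-\bA=(\hat{\bTheta}-\bTheta)+(\bTheta-\bA)$ with the approximation term bounded by $\lambda_D^{-1}\|\bD\bA^T\|_F^2$ and $\lambda_D\asymp m$ (the paper's Lemma~\ref{Lemma 2}), the same propagation of least-squares rates through Steps 1 and 3 to get $\|\hat{\bphi}_{\bB}-\bphi_{\bB}\|=\mathcal{O}_p(m/\sqrt{n})$, the same Davis--Kahan/Wedin argument with the order-$m$ pervasiveness gap yielding $\|\hat{\bP}_{\bD}-\bP_{\bD}\|=\mathcal{O}_p(n^{-1/2})$, and the same perturbation analysis of the Step-5 regression. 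One minor imprecision: in the aggregation over $(\hat{\bU}_{\bB},\hat{\bU}_{\bC_1},\dots,\hat{\bU}_{\bC_p})$ the relevant quantity is not $\lambda_D\asymp m$ itself but the smallest nonzero eigenvalue of the population Gram matrix $\bU_{\bB}\bU_{\bB}^T+\sum_j\bU_{\bC_j}\bU_{\bC_j}^T$, which the paper bounds below by a constant using the full-rank condition on $\bD$ in Assumption~\ref{ass_3}.
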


Note that the term $\frac{1}{m}\| \hat{\bTheta} - \bA \|_F^2$ can be interpreted as the squared error per response. This theorem shows that the error can be decomposed into two parts, $\mathcal{O}_p(\frac{1}{n})$ the typical parametric rate for estimating a finite dimensional parameter and  $\mathcal{O}_p(\frac{K}{m}\eta)$ the approximation error due to the difference between $\bTheta$ and $\bA$. To further examine the approximation error, we can show that $\|\bD \bA^T\|_F^2=\mathcal{O}_p(Km)$, when each row of $\bA$ and $\bD$ are independently generated from $N(0, \bI_m)$ (or more generally $N(0, \bSigma)$ where $\bSigma$ has a bounded operator norm). This implies $\eta=\mathcal{O}_p(1)$. In addition, if $m\gg nK$, the approximation error is asymptotically ignorable compared to the parametric rate, leading to    $\frac{1}{m}\| \hat{\bTheta} - \bA \|_F^2 = \mathcal{O}_p(\frac{1}{n})$ which is the best possible rate for estimating a finite dimensional parameter in regular parametric models. Even if $m\gg nK$ does not hold, the estimator is still consistent in terms of the squared error per response, as long as $K/m\rightarrow 0$. 

Finally, we note that when $\eta=\mathcal{O}_p(1)$, the error bound decreases with $m$, which implies that by collaborating more types of responses, the treatment effect estimation can be more accurate. This phenomenon can be viewed as the bless of dimensionality in our problem.

\section{Extension to Heteroscedastic and Correlated Noise}\label{sec:heteroscedastic}
In this section, we generalize our method to the setting with heteroscedastic and correlated errors, i.e.,  $\Cov(\bE)$ can be any positive definite matrix. Recall that $\bphi_{\bB} = \bB^T \Cov(\bW|\bX) \bB + \Cov(\bE)$. In view of Algorithm \ref{algo:homoscedastic}, when the noise is heteroscedastic and correlated, the main challenge is that the eigenspace of $\bphi_B$ corresponding to the first $K$ eigenvalues no longer coincides with the right singular space of $\bB$. Because of this, we may no longer be able to identify $\bP_D$ via the eigenspaces of the coefficient matrices in step 4 of Algorithm \ref{algo:homoscedastic}. To address this problem, we turn to a recently developed procedure called HeteroPCA \citep{zhang2022heteroskedastic} that allows us to recover the desired right singular space of $\bB$ by iteratively imputing the diagonal entries of $\bphi_B$ via the diagonals of its low-rank approximations. For completeness, we restate their procedure in Algorithm \ref{algo:heteroPCA}. The estimation of $\bA$ remains nearly identical to the homoscedastic setting, except we now estimate $\tilde{\bU}_{\bB}$ by performing HeteroPCA rather than PCA on the coefficient matrix $\hat{\bphi}_{\bB}$ obtained from step 3 in Algorithm \ref{algo:homoscedastic}. The full procedure is detailed in Algorithm \ref{algo:heteroscedastic}. This algorithm requires to specify the number of iterations $T$ in the HeteroPCA algorithm. Usually, a small $T$ such as $T = 5$ yields satisfactory results in our simulations.

%Theoretically, guaranteeing recovery of $\bU_{\bB}$ requires the following mild incoherence condition.

To establish the rate of convergence of the debaised estimator in Algorithm \ref{algo:heteroscedastic}, we further impose the following condition.

\begin{assumption}\label{ass_4}
Let $\bU \in \RR^{m \times K}$ be the first $K$ left singular vectors of $\bB^T$, and $\lambda_1$ and $\lambda_B$ be the first and $K$th largest eigenvalues of $\bB^T \Sigma_{\bW} \bB$, respectively. Let $\{e_j\}_{j=1}^{m}$ denote the canonical basis vector of $\RR^m$. Then, $\exists \hspace{1mm} C_U > 0$ such that 
\begin{align*}
    \frac{\lambda_1}{\lambda_B} \max_{1\leq j\leq m} \|e_j^T U\|_2^2 \leq C_U.
\end{align*} 
\end{assumption}

This condition is a variation of the incoherence condition in the matrix completion literature \citep{candes2010power}, which is mainly used to recover the singular vector space $\bU$ in the HeteroPCA algorithm. This assumption controls how the singular vector space $\bU$ may coincide with the canonical basis vectors. Intuitively, if $\bU$ becomes more aligned with the canonical basis vectors, it is more difficult to separate $\bB^T \Cov(\bW|\bX) \bB$ and $\Cov(\bE)$ from $\bphi_{\bB}$, even if $\Cov(\bE)$ is just a diagonal matrix. 

We note that, unlike the original HeteroPCA algorithm \citep{zhang2022heteroskedastic} which requires the error matrix $\Cov(\bE)$ to be a diagonal matrix, we also allow $\Cov(\bE)$ to have non-zero off-diagonal entries (i.e., correlated errors). For any matrix $\bM$, let $D(\bM)$ be the matrix with diagonal entries equal to the diagonal entries of $\bM$, but with all off-diagonal entries equal to 0. Let $\Gamma(\bM) = \bM - D(\bM)$. The following theorem shows the rate of convergence of the debiased estimator in Algorithm \ref{algo:heteroscedastic}. 

%In fact, the robustness of HeteroPCA allows the covariance of the random noise to deviate from a diagonal matrix. 

\begin{theorem}\label{thm:thm 2}  
Assume that $\bSigma_W=\Cov(\bW|\bX)$ does not depend on $\bX$, the matrix $\bSigma_W$ has full rank and Assumptions \ref{ass_2}-\ref{ass_4} hold. If $\|\Gamma(\Cov(\bE))\|_F = \mathcal{O}(\lambda_B\sqrt{K})$ and $T \gg (1 \vee \log \frac{\sqrt{K} \lambda_B}{\|\Gamma(\Cov(E))\|_F})$, then the debiased estimator $\hat{\bTheta}$ from Algorithm \ref{algo:heteroscedastic} satisfies:
    \begin{equation}\label{eq_rate_2}
    \frac{1}{m}\| \hat{\bTheta} - \bA \|_F^2 = \mathcal{O}_p\Big(\frac{1}{n}+\frac{K}{m}\eta+\frac{\bar \eta}{m^2}\Big),
    \end{equation}
where $\eta=\frac{1}{Km}\|\bD \bA^T\|_F^2$ and $\bar\eta=\|\Gamma(\Cov(\bE))\|_F^2$.
\end{theorem}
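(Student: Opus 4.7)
The proof closely parallels that of Theorem \ref{thm:thm 1}, with the sole structural change being the replacement of PCA by HeteroPCA in step 4 of the algorithm. Steps 1--3 of Algorithm 3 coincide with those of Algorithm \ref{algo:homoscedastic}, so the least-squares analysis under Assumption \ref{ass_2} carries over verbatim and yields the same estimation rates for $\hat{\bphi}_\bB$, $\hat{\bphi}_{(\bC_j,\bC_k)}$ and $\hat{\bphi}_{(\bB,\bC_j)}$. The work is concentrated in the subspace-estimation step and in retracing the final projection-and-OLS argument with the enlarged subspace-estimation error.

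Write
\begin{equation*}
\bphi_\bB = \bB^T \bSigma_W \bB + D(\Cov(\bE)) + \Gamma(\Cov(\bE)),
\end{equation*}
so the leading rank-$K$ signal is $\bB^T \bSigma_W \bB$ with principal eigenspace $\bU_\bB$ and $K$-th eigenvalue $\lambda_B \asymp m$. HeteroPCA is designed to absorb the diagonal contamination $D(\Cov(\bE))$ through its imputation updates; the plan is to invoke the perturbation bound of \cite{zhang2022heteroskedastic}, treating both $\Gamma(\Cov(\bE))$ and the sampling fluctuation $\hat{\bphi}_\bB - \bphi_\bB$ as noise. Under Assumption \ref{ass_4}, the prescribed iteration count $T$, and the signal-to-noise condition $\|\Gamma(\Cov(\bE))\|_F = \mathcal{O}(\lambda_B \sqrt{K})$, this gives
\begin{equation*}
\|\sin\Theta(\tilde{\bU}_\bB, \bU_\bB)\|_F^2 = \mathcal{O}_p\Big(\tfrac{1}{n} + \tfrac{\bar\eta}{m^2}\Big),
\end{equation*}
where the $\bar\eta/m^2$ contribution is $\|\Gamma(\Cov(\bE))\|_F^2/\lambda_B^2$. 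For each $j$, $\bphi_{\bC_j} = \bC_j^T \bSigma_W \bC_j$ carries no noise term, so ordinary PCA combined with a Davis--Kahan argument delivers $\|\sin\Theta(\tilde{\bU}_{\bC_j}, \bU_{\bC_j})\|_F^2 = \mathcal{O}_p(1/n)$ exactly as in Theorem \ref{thm:thm 1}. The SVD concatenation in step 4 and the pervasiveness $\lambda_D \asymp m$ of Assumption \ref{ass_3} lift these into $\|\hat{\bP}_\bD - \bP_\bD\|_{op}^2 = \mathcal{O}_p(1/n + \bar\eta/m^2)$.

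The rest mirrors Theorem \ref{thm:thm 1}. Decompose
\begin{equation*}
\hat{\bTheta}^T - \bA^T = \bigl(\hat{\bTheta}^T - \hat{\bP}_\bD^{\perp} \bA^T\bigr) + \bigl(\bP_\bD - \hat{\bP}_\bD\bigr)\bA^T - \bP_\bD \bA^T.
\end{equation*}
The first summand is controlled by the OLS analysis of step 5 on the projected responses and contributes $\mathcal{O}_p(1/n)$ after dividing by $m$. The third summand is the approximation error, bounded via $\|\bP_\bD \bA^T\|_F^2 \le \|\bD \bA^T\|_F^2/\lambda_D$ together with Assumption \ref{ass_3} to reproduce the $K\eta/m$ rate of Theorem \ref{thm:thm 1}. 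The new middle summand is bounded by $\|\hat{\bP}_\bD - \bP_\bD\|_{op}^2 \cdot \|\bA\|_F^2/m$; since $\|\bA\|_F^2/m$ is of constant order under the same scaling that makes $\eta = \mathcal{O}_p(1)$ (for instance when rows of $\bA$ have bounded covariance), this contributes precisely the $\mathcal{O}_p(\bar\eta/m^2)$ piece appearing in \eqref{eq_rate_2}.

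The main obstacle is calibrating the HeteroPCA bound so that the sampling perturbation and $\Gamma(\Cov(\bE))$ are simultaneously absorbed with the clean $\lambda_B$-denominator. This is exactly where Assumption \ref{ass_4} and the logarithmic requirement on $T$ earn their keep: the incoherence prevents the iterative diagonal imputation from amplifying perturbations along coherent directions, and the logarithmic $T$ ensures the iterates enter the contractive regime of the underlying fixed-point map. A secondary subtlety is that the off-diagonal perturbation enters only through $\tilde{\bU}_\bB$ (and not through the $\tilde{\bU}_{\bC_j}$), so it appears additively rather than multiplicatively in the final rate, which is what produces the three-term bound in \eqref{eq_rate_2}.
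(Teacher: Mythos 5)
Your proposal is correct and follows essentially the same route as the paper: reuse steps 1--3, bound $\|\sin\Theta(\tilde{\bU}_{\bB},\bU_{\bB})\|_F$ by a HeteroPCA perturbation bound with the sampling error and $\Gamma(\Cov(\bE))$ as perturbations (the paper invokes Theorem 13 of \cite{bing2022adaptive}, a version of the \cite{zhang2022heteroskedastic} bound accommodating non-diagonal $\Cov(\bE)$), keep ordinary PCA for the $\bU_{\bC_j}$, lift to $\|\tilde{\bP}_{\bD}-\bP_{\bD}\|$, and rerun the step-5 argument. The only differences are cosmetic bookkeeping in the final decomposition, at the same level of rigor as the paper's own treatment.
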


Compared to the results in Theorem \ref{thm:thm 1}, the error bound with heteroscedastic and correlated noise includes an additional term $\mathcal{O}_p(\frac{\bar \eta}{m^2})$, which comes from the correlation among the noise vector $\bE$. In particular, if the correlation among $\bE$ is relatively weak with $\bar\eta\ll m^2/n$ and $\eta\ll m/(nK)$ holds as discussed after Theorem \ref{thm:thm 1}, we obtain that the squared error per response has the parametric rate $\mathcal{O}_p(\frac{1}{n})$. 

Finally, we comment that if one applies the proposed Algorithm \ref{algo:homoscedastic} tailored for the homoscedastic noise to deal with the model with heteroscedastic and correlated noise in this section, the rate of the estimator would be slower than (\ref{eq_rate_2}), as it contains an additional error related to the degree of heteroscedasticity of the noise variance. We refer to \cite{bing2022adaptive} for a similar result when the model has no interaction terms.

\begin{algorithm}
\caption{HeteroPCA($\hat{\Sigma}$, $K$, $T$)} \label{algo:heteroPCA}
\textbf{Require: } Matrix $\hat{\Sigma} \in \RR^{m \times m}$, rank $K$, number of iterations $T$;

1. Set $N_{ij}^{(0)} = \hat{\Sigma}_{ij}$ for all $i \neq j$ and $N_{ii}^{(0)}$ for all $i$.

2. For $t=0,1,...,T$:

\hspace{5mm} Compute SVD of $N^{(t)} = \sum_{i=1}^{m} \lambda_i^{(t)} u_i^{(t)} (v_i^{(t)})^T$, where $\lambda_1^{(t)} \geq \lambda_2^{(t)} \geq ... \geq 0$.

\hspace{5mm} Let $\tilde{N}^{(t)} = \sum_{i=1}^{r} \lambda_i^{(t)} u_i^{(t)} (v_i^{(t)})^T$.

\hspace{5mm} Set $N_{ij}^{(t+1)} = \hat{\Sigma}_{ij}$ for all $i \neq j$ and $N_{ii}^{(t+1)} = \tilde{N}_{ii}^{(t)}$.

3. Return $U^{(T)} = (u_1^{(T)}, ..., u_r^{(T)})$.
\end{algorithm}

\begin{algorithm}
\caption{Debiased estimator with heteroscedastic and correlated noise}\label{algo:heteroscedastic}

 \textbf{Require: } $n$ i.i.d data $(\bY_i,\bX_i)_{i=1}^n$, rank $K$, number of iterations $T$;
 
 1-3. Same as steps 1-3 of Algorithm \ref{algo:homoscedastic}. 
 
 4. Compute $\tilde{\bU}_{\bB}$ from HeteroPCA($\hat{\bphi}_{\bB}$, $K$, $T$) in Algorithm \ref{algo:heteroPCA}. Then compute $\hat \bU_{\bC_j}$, the first $K$ eigenvectors of, $\hat{\bphi}_{\bC_j}:=\hat\bphi_{(\bC_j,\bC_j)}$. Via SVD, obtain $\tilde{\bU}_{\bD} \in \RR^{m \times (p+1) K}$, the first $(p+1) K$ left singular vectors of $(\tilde{\bU}_{\bB},\hat{\bU}_{\bC_1},...,\hat{\bU}_{\bC_p})$. Then estimate the projection matrix $\tilde{\bP}_{\bD} = \tilde{\bU}_{\bD} \tilde{\bU}_{\bD}^T$. 
 
 5. Obtain $\hat{\bTheta}$ by solving (\ref{eq:step 5}) with $\tilde{\bP}_{\bD}$ in lieu of $\hat{\bP}_{\bD}$.

\end{algorithm}

\section{Simulation Results}\label{sec:simulation}

We compare our methods outlined in Algorithms \ref{algo:homoscedastic} and \ref{algo:heteroscedastic} with the following list of competitors: 
\begin{itemize}
    \item Oracle: the estimator obtained from (\ref{eq:step 5}) by using the true values of $\bB$ and $\bC$'s to construct $\bP_D$. This estimator is non-practical and only served as a benchmark. 
    \item Non-interaction: the estimator obtained assuming there is no interaction between $\bX$ and $\bZ$ in model (\ref{eq:first_model}). The estimation procedures are adapted from \cite{bing2022adaptive}. We similarly consider two versions of the algorithms with the homoscedastic and heteroscedastic noise.
    \item OLS: the ordinary least squares estimator without accounting for the presence of hidden variables.
\end{itemize}

To simplify the presentation, we mainly focus on the non-interaction method for comparison against our method, since the recently proposed methods in surrogate variable analysis (SVA)  that seek to account for hidden variables are similar to the non-interaction method presented above. 

\subsection{Data generating mechanism}\label{subsec: data generating mechanism}

The design matrix is generated as $\bX_i \sim \mathcal{N}_p(0, \bSigma)$ independently for all $1 \leq i \leq n$ and for $\Sigma_{jk} = (-1)^{j+k} (0.5)^{|j-k|}$ for all $1\leq j,k \leq p$. Let $\bZ_i=\bpsi^T\bX_i+\bW_i$ with $\bpsi_{jk} \sim \eta \cdot \mathcal{N}(0.5,0.1)$ independently for all $1\leq j \leq p$ and $1 \leq k \leq K$, where $\eta$ controls the level of dependence the observed and hidden variables. Set $\bA_{\ell j}^T \sim \mathcal{N}(0.5,0.1)$, and $\bB_{\ell k}^T$ and $\bC_{j, \ell k}^T$ $\sim \mathcal{N}(0.1,1)$ independently for all $1 \leq \ell \leq m$, $1 \leq k \leq K$, and $1\leq j \leq p$. The stochastic error $\bW_i$ is sampled as $\bW_{ik} \sim \mathcal{N}(0,1)$ independently for all $1 \leq i \leq n$ and $1 \leq k \leq K$. In the homoscedastic setting, $\bE_{i\ell} \sim \mathcal{N}(0,1)$ independently for all $1 \leq i \leq n$ and $1 \leq \ell \leq m$; and for the heteroscedastic setting, $\bE_{i \ell} \sim \mathcal{N}(0,\tau_{\ell}^2)$ where $\tau_{\ell}^2 = m v_{\ell}^{\alpha} / \sum_{\ell} v_{\ell}^{\alpha} \cdot (p+1)$ and $v_{\ell} \sim \textrm{Unif}\,[0,1]$ independently for $1 \leq \ell \leq m$. This specification of $\tau_{\ell}$ guarantees the overall level of variability scales with the number of hidden interaction terms (i.e. $\sum_{\ell} \tau_{\ell}^2 = p+1$) with larger values of $\alpha$ corresponding to a higher degree of heteroscedasticity.   

The number of hidden variables $K$ is fixed at 3 and is assumed to be known in all methods except in the experiments concerning selection of $K$ in Section \ref{sec:k-selection}.

\subsection{Experiment Setup}

We fix $p=2$ and consider 2 separate settings: (i) small $m$ and large $n$ ($m=25$, $n=1000$) and (ii) large $m$ and small $n$ ($m=500, n=100$). For the homoscedastic case, we iterate over $\eta \in \{0.1,0.3,...,1.1,1.3\}$; while, for the heteroscedastic case, we iterate over $\alpha \in \{0,3,6,...,12,15\}$ and fix $\eta = 0.5$. Within each parameter setting, we independently generate 100 datasets according to Section \ref{subsec: data generating mechanism} and report the average Sum Squared Error (SSE) in log scale $\log (\frac{1}{m} \|\hat{\bTheta} - \bA\|_F^2)$, %the log average Test Mean Squared Error (TMSE) $\log (\frac{1}{n^* m} \|\bP_{D}^{\perp} \bY^* - \hat{\bTheta}^T \bX^* \|_F^2)$, 
and the average Prediction Mean Squared Error (PMSE) in log scale $\log (\frac{1}{n^*m} \|\bY^* - \hat{\bTheta}^T \bX^* \|_F^2)$ on a newly generated test set $(\bX^*, \bY^*)$ with $n^* = 5000$ data points. 

\subsection{Results and Discussion}

\subsubsection{SSE}\label{sec:sse}

\begin{figure}[h]
\caption{SSE $\log (\frac{1}{m} \|\hat\bTheta - \bA\|_F^2)$ for homoscedastic (top row) and heteroscedastic (bottom row) settings.}
\centering
\includegraphics[width=0.90\textwidth]{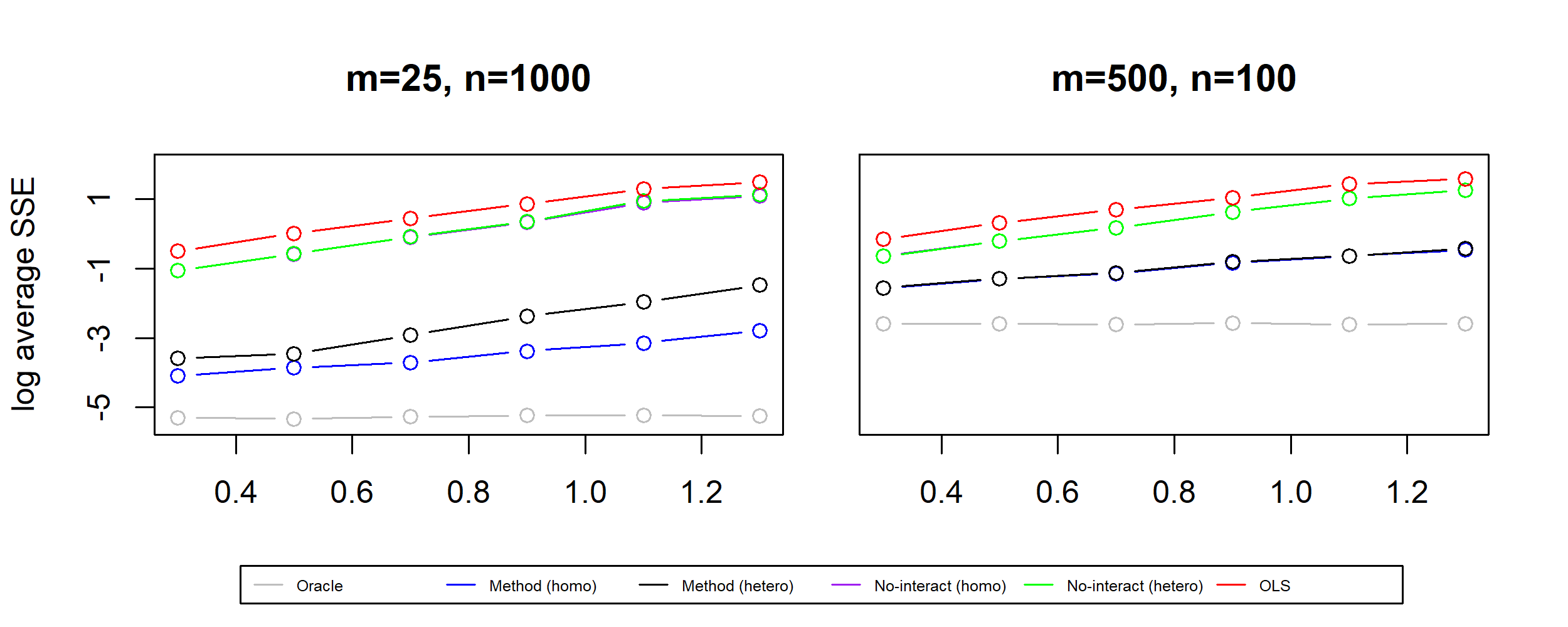}
\includegraphics[width=0.90\textwidth]{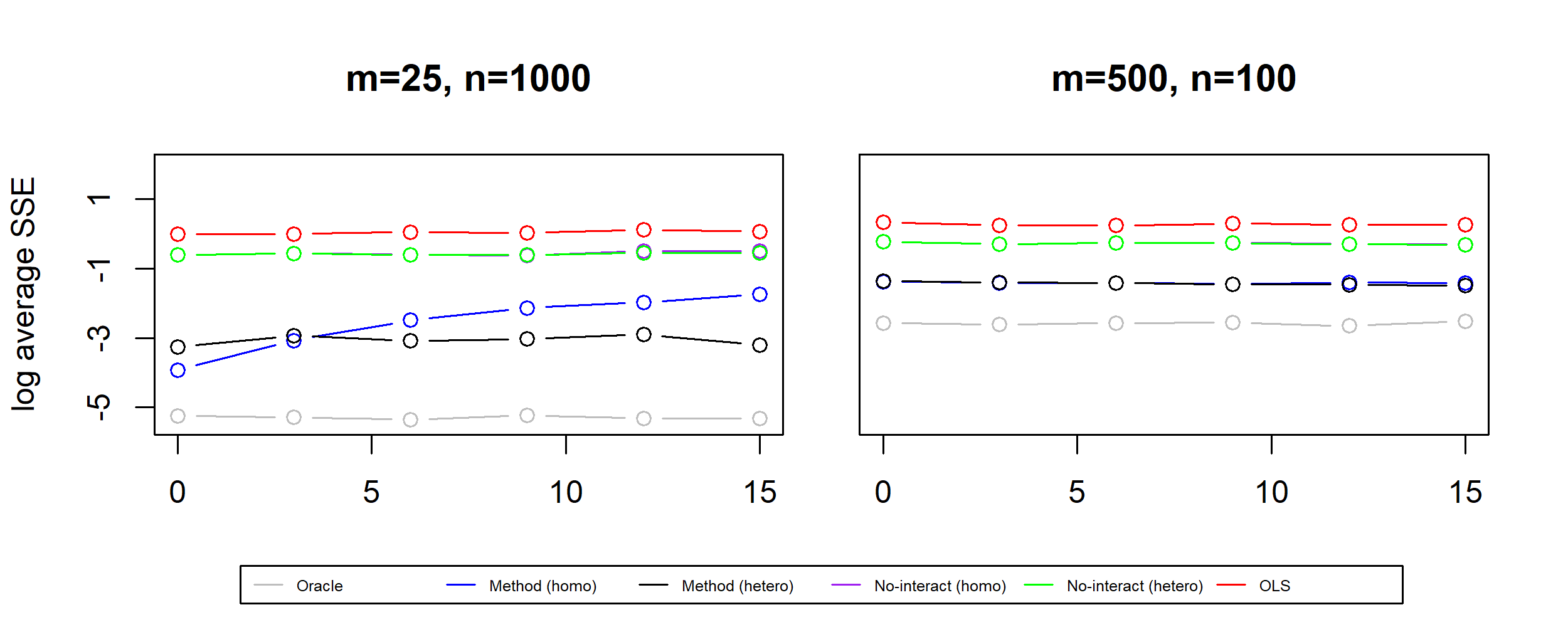}
\label{fig:SSE}
\end{figure}

\textit{Homoscedastic setting.} As seen in the top row of Figure \ref{fig:SSE}, our method outperforms the other methods across both settings whether or not we use HeteroPCA. In particular, it outperforms the methods without considering the interactions among the treatment  and the hidden variables, which shows the importance of accounting for the effect of interactions in this model. Under setting 1, where we have sufficiently large amount of data relative to the number of responses ($n\gg m$), the proposed Algorithm \ref{algo:homoscedastic} tailored for the homoscedastic noise outperforms our Algorithm \ref{algo:heteroscedastic}. Under setting 2, when $m$ is sufficiently large, the two algorithms perform relatively the same. 

\textit{Heteroscedastic setting.} As illustrated in the bottom row of Figure \ref{fig:SSE}, our method once again outperforms the other methods across both settings. On the other hand, employing Algorithm \ref{algo:heteroscedastic} with HeteroPCA for setting 1 yields substantially better results over using Algorithm \ref{algo:homoscedastic} with PCA. This suggests that our algorithm tailored for the heteroscedastic setting, when the underlying noise has a high degree of heteroscedasticity, is most preferable for large $n$ and small $m$. This is consistent with the discussion following Theorem \ref{thm:thm 2}. Similar to the homoscedastic setting, in setting 2 when $m$ is large enough, the two algorithms yield very similar results.

\subsubsection{PMSE}\label{sec:testmses}

\begin{figure}[h]
\caption{PMSE $\log (\frac{1}{n^*m} \|\bY^* - \hat{\bTheta}^T \bX^* \|_F^2)$ for homoscedastic settings.}
\centering
\includegraphics[width=0.90\textwidth]{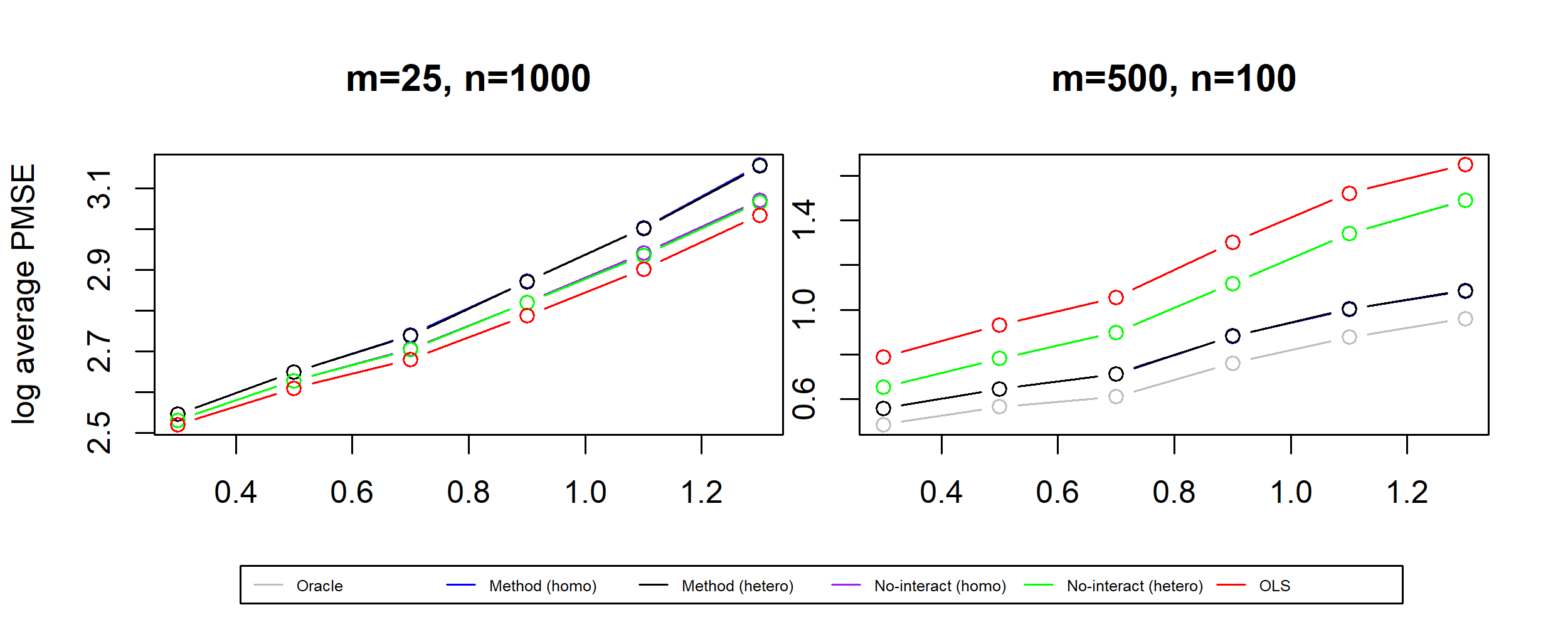}
\label{fig:PMSE}
\end{figure}

%\textit{Homoscedastic setting.} The top row of  Fig \ref{fig:TMSE} demonstrates the same trends as the RSSE plots (see Fig \ref{fig:RSSE}) in both the homoscedastic and heteroscedastic setting. Under setting 1, the algorithm designed for the homoscedastic setting outperforms all other methods, with the heteroscedastic version coming in as close second; whereas, in setting 2, both algorithms perform essentially the same, while outperforming its non-interaction based counterparts. 

The PMSE as depicted in Fig \ref{fig:PMSE} exhibits a more interesting behavior than SSE. It appears that the prediction performance of our method is heavily dependent on the relative magnitude of $m$ and $n$. When $m$ is small and $n$ is large, the prediction error (in the test set) of our method is similar to the rest of the methods. One possible explanation is that, while our method can remove the bias due to the hidden variables, this effect is dominated by the variance of the noise $\Var(\bE)$ in the test prediction error, so that all the methods including OLS yield similar results. However, in the large $m$ and small $n$ regime, our interaction based methods lead to much smaller test error, as correcting for the bias due to the hidden variables becomes more imperative. While we mainly focus on the treatment effect estimation instead of prediction in this paper, our numerical results also demonstrate the favorable performance of our method in prediction when $m$ is relatively large. The PMSE under the heteroscedastic setting has a similar pattern and is omitted.

%For sufficiently large ratios, in our case $m/n = 2$, our interaction based methods outperform more conventional methods. Thus, in the large $m$, small $n$ regime, it is beneficial to adjust for the interaction between observed and hidden variables when we are interested in prediction of $\bY$. We substantiate these claims in the real-world data application in Section \ref{sec:dataapplication}.

%\noindent \textit{Heteroscedastic setting.} 
%Similar to the homoscedastic setting, the TMSE follows similar trends as the RSSE plots (see bottom row of Fig \ref{fig:TMSE}). The heteroscedastic version of our algorithm outperforms all other methods for the small $m$, large $n$ regime, and performs nearly identical to the homoscedastic version for the large $m$, small $n$ regime. In either setting, our algorithm achieves significantly lower error. 

%The PMSE with varying degrees of heteroscedasticity did not yield any meaningful results, and so its plots under this setting are omitted from Fig \ref{fig:PMSE}. \\

\subsubsection{Selection of $K$}\label{sec:k-selection} 

In practice, the number of hidden variables $K$ is often unknown. In this section, we propose a practical approach to estimate $K$ via a variant of the ratio test proposed by \cite{bing2022adaptive}. In particular, we select $K$ as the most common index of the largest eigenvalue gap across the coefficient matrices $\hat{\bphi}_B$ and $\hat{\bphi}_{C_j}$ obtained from step 3 of our algorithms. More formally,  define
\begin{align*}
    \hat{K} = \arg \max_{i\in\{1,...,K^*\}} | S_i |, ~~\textrm{where}~~S_i=\Big\{ j \in \{1,...,p+1\} :  \frac{\hat{\lambda}_{j,i}}{\hat{\lambda}_{j,i+1}} \geq \frac{\hat{\lambda}_{j,k}}{\hat{\lambda}_{j,k+1}} \forall k \neq i\Big\},
\end{align*}
$\hat{\lambda}_{1,i}$ and $\hat{\lambda}_{j,i}$, $1\leq i \leq m$, are the ordered nonincreasing eigenvalues of $\hat{\bphi}_B$ and $\hat{\bphi}_{C_j}$, respectively, and $K^*$ is an upper bound, often $\frac{\lfloor (n \wedge m) \rfloor}{2}$, on the number of hidden variables. In other words, the set $S_i$ includes all the indices $j$ such that the eigenvalue ratio $\frac{\hat{\lambda}_{j,i}}{\hat{\lambda}_{j,i+1}}$ is maximized at $i$. Then $\hat K$ corresponds to the one with the largest cardinality of $S_i$. This approach can be viewed as a generalization of the elbow method in clustering and PCA.

To evaluate the performance of this approach, we consider the following experiments. Recall that $\lambda_{K}(\bB^T \bSigma_W \bB)$ is the $K$th largest eigenvalue of $\bB^T \bSigma_W \bB$. We define the signal-to-noise ratio (SNR) as 
\begin{align*}
    \frac{1}{m} \lambda_{K}(\bB^T \bSigma_W \bB),
\end{align*}
which quantifies how the $K$th largest eigenvalue is separated from 0. In our experiments, we generate $\bB_{\ell k}^T \sim \mathcal{N}(0.1,1)$ independently for all $1 \leq \ell \leq m$ and $1 \leq k \leq K$, and vary the SNR by setting $\bSigma_W = \sigma_W^2 \bI_K$ across $\sigma_W \in \{0.1,0.3,...,1.3,1.5\}$ while  fixing $\alpha = 0$ and $\eta = 0.5$. For comparison, we also consider the method for selecting $K$ used in the Non-interaction approach \citep{bing2022adaptive}. We also consider two settings: (i) small $m$ and large $n$ ($m=25, n=1000$) and (ii) large $m$ and large $n$ ($m=500, n=1000$). 

From Figures \ref{fig:k_selection1} and \ref{fig:k_selection2}, we see that the non-interaction-based method consistently overestimates the number of hidden variables even when the SNR is large. On the other hand, our method consistently selects the correct value of $K$ (i.e., $K=3$) for large enough SNR. 

\begin{figure}[h]
\caption{$\hat{K}$ for non-interaction and interaction based methods $n \gg m$. The green dashed line indicates the true number of hidden features $K=3$.}
\centering
\includegraphics[width=0.90\textwidth]{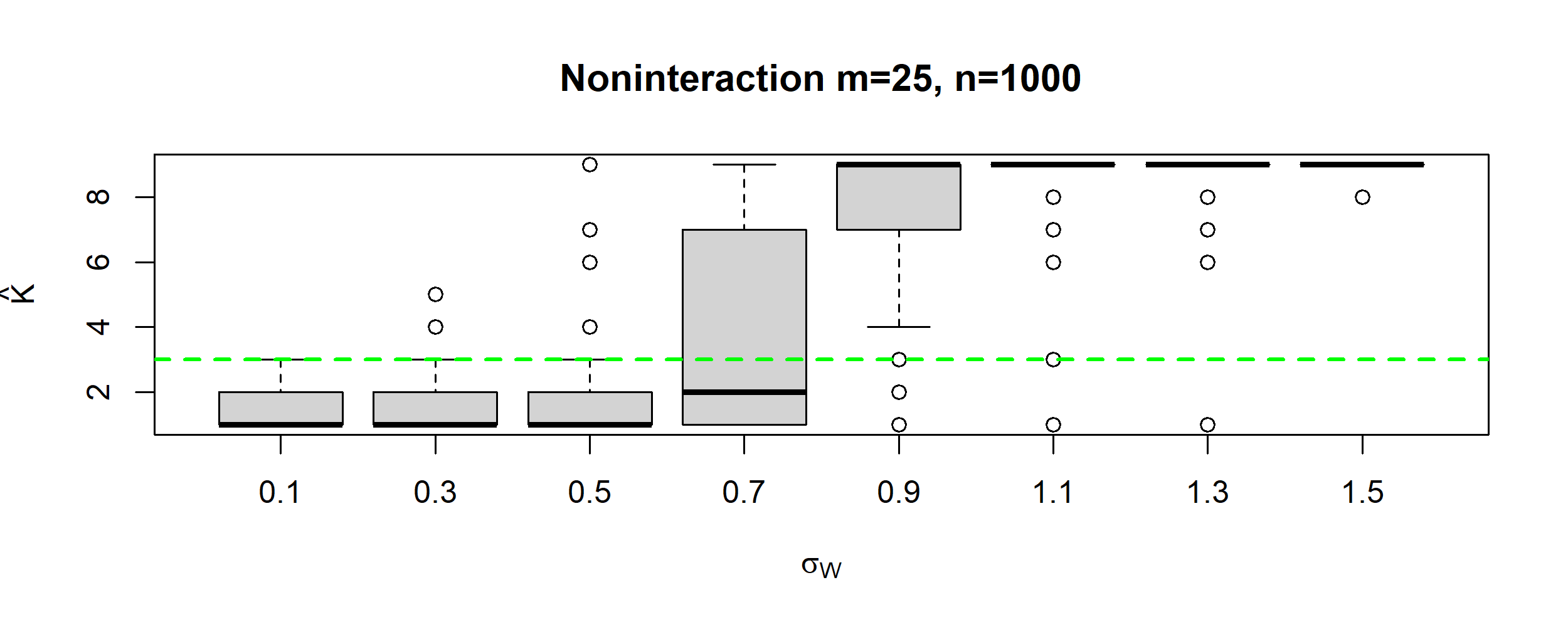}
\includegraphics[width=0.90\textwidth]{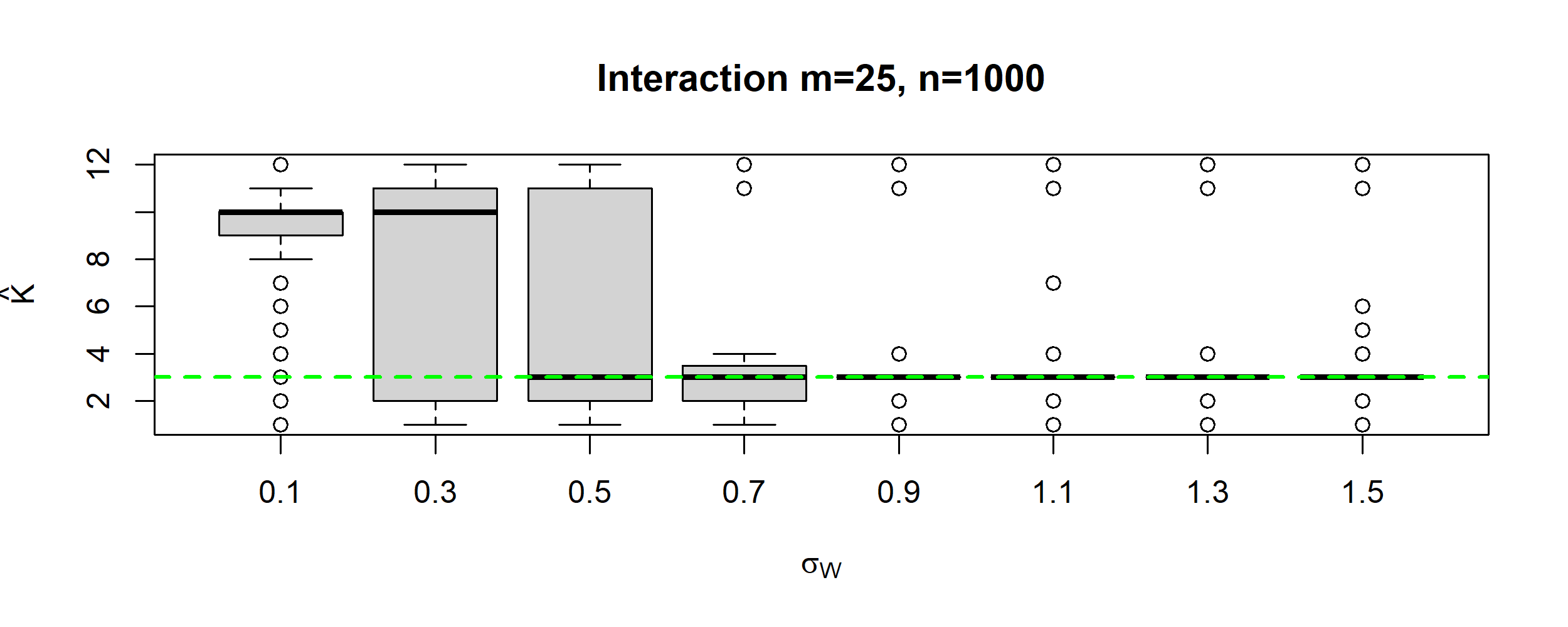}
\label{fig:k_selection1}
\end{figure}

\begin{figure}[h]
\caption{$\hat{K}$ for non-interaction and interaction based methods $m\gg n$. The green dashed line indicates the true number of hidden features $K=3$.}
\centering
\includegraphics[width=0.90\textwidth]{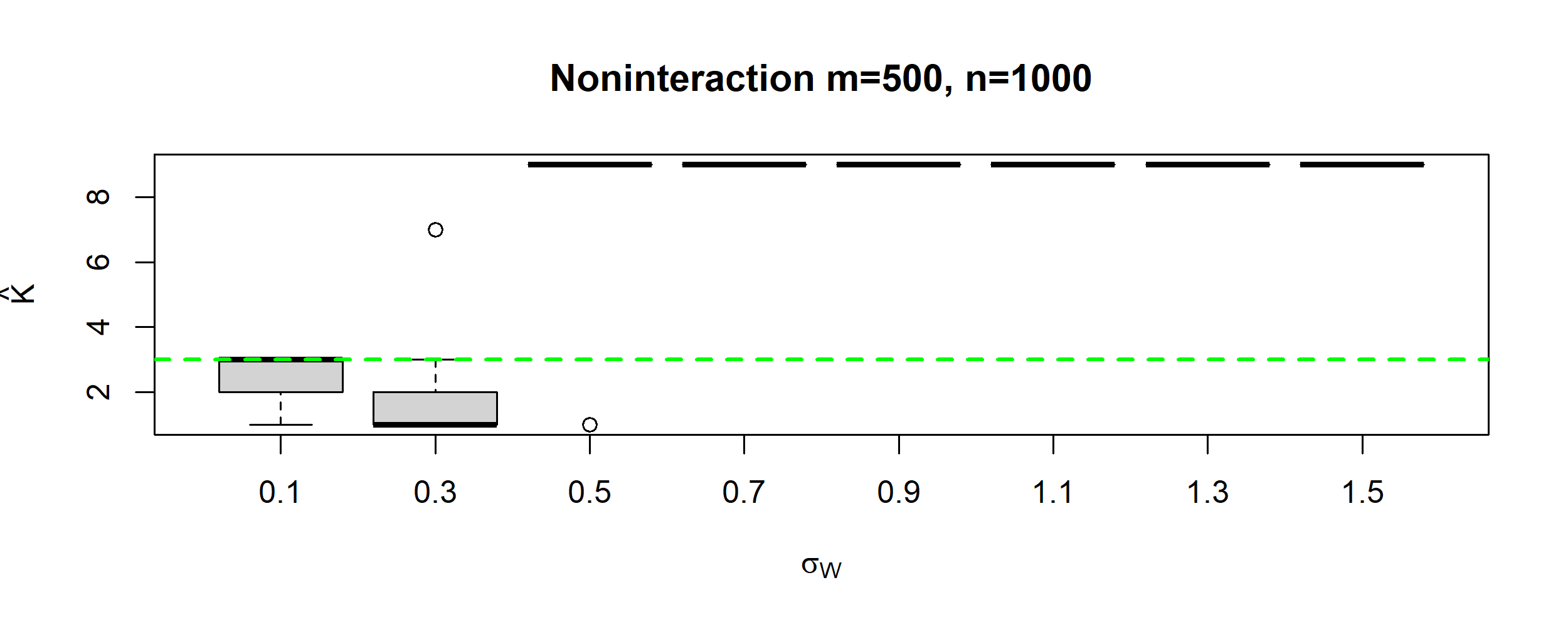}
\includegraphics[width=0.90\textwidth]{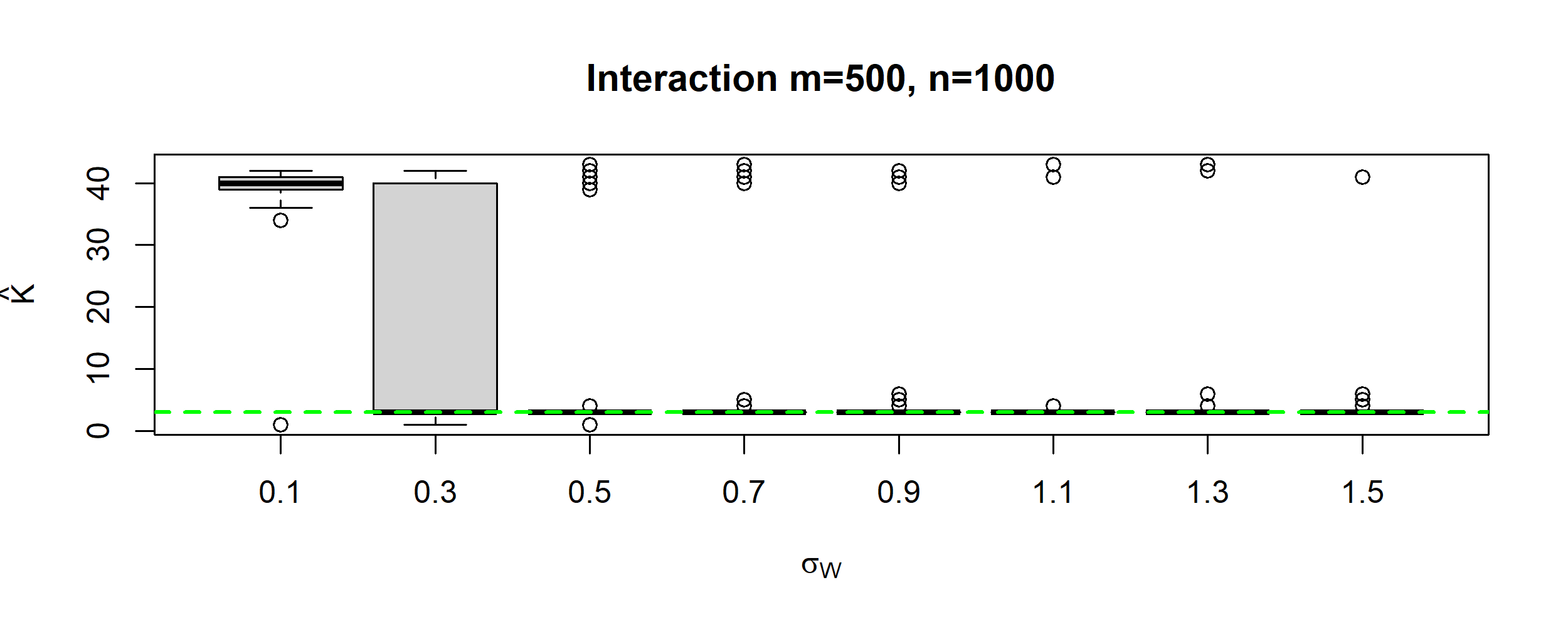}
\label{fig:k_selection2}
\end{figure}

\newpage

\section{Real Data}\label{sec:dataapplication}

We consider a microarray dataset gathered by \cite{vawter2004gender} which is comprised of postmortem microarray samples of 5 men and 5 women from 3 separate regions of the brain. This dataset was originally curated to answer questions surrounding gender differences in the prevalence of neuropsychiatric diseases and has since been used to test various statistical methods that seek to account for hidden variables (see \cite{gagnon2012using} and \cite{wang2017confounder}).

Each individual was sampled by 3 different universities, resulting in $10 \times 3 \times 3 = 90$ chips of which 6 were missing, leaving a total of $n=84$ samples. Of the remaining $84$ samples, $3$ were replicated to yield $87$ data points. In our analysis, we decide to omit the replicated samples. The sequences were preprocessed via Robust Multichip Average (\cite{irizarry2003exploration}) and then normalized feature-wise to yield readings of length $m=12600$. We take gender to be the primary variable with the brain pH level at the time of sampling to be a secondary covariate for a total of $p=2$ observed features.

In our data analysis, since the true parameter value is unknown, we decided to compute the test prediction MSE (defined in Section \ref{sec:testmses}) via 10-fold cross validation for the same 5 methods as the simulation study (we exclude the oracle, since it is unavailable). To select the number of hidden variables, we chose $\hat{K}$ as in Section \ref{sec:k-selection}. As seen in Table \ref{tab:real_data}, our interaction based method for the homoscedastic setting (i.e., Algorithm \ref{algo:homoscedastic}) achieves the lowest cross-validated prediction error. In Section \ref{sec:testmses}, our simulations suggested that one should not ignore the interaction present between observed and unobserved covariates for data with large $m$ and small $n$ when attempting to predict the response $\bY$. Our results here corroborate this claim.

\begin{table}
    \centering
    \begin{tabular}{|c|c|}
        \hline
         Method & cross validated PMSE  \\
         \hline
         OLS & 1.0602 \\
         \hline
         Non-interaction (homo) & 1.0493 \\ 
         Non-interaction (hetero) & 1.0499 \\
         \hline 
         Interaction (homo) & \textbf{1.0252} \\
         Interaction (hetero) & 1.0308 \\
         \hline
    \end{tabular}
    \caption{Cross validated PMSE for various methods on real data.}
    \label{tab:real_data}
\end{table}

\section*{Acknowledgment}
Ning is supported in part by National Science Foundation (NSF) CAREER award DMS-1941945
and NSF award DMS-1854637.

\bibliographystyle{abbrvnat}
\bibliography{ref}

\appendix
\section{Proof of Theorem \ref{thm:thm 1}}

\begin{proof} We prove the result under $p=1$ since the proof can be easily generalized to the case when $p>1$. Throughout the proof, denote $\bC_{1} = \bC \in \RR^{m \times K}$, $\bL_{2,11}^T = \bL_2^T \in \RR^m$, $\bL^T = [\bL_1^T, \bL_2^T] \in \RR^{m \times 2}$ and $\hat{\bL}^T = [\hat{\bL}_1^T, \hat{\bL}_2^T]$. 

For step 1, we may follow the standard results for ordinary least square estimators using Assumption \ref{ass_2} and show that $\|\hat{\bL}-\bL\|_F^2=\mathcal{O}_p(m/n)$. 

In step 3, for all $(r,s) \in [m] \times [m]$, we regress $\bY_{rs} := ((\hat{\bepsilon}_i \hat{\bepsilon}_i^T)_{rs})_{i \in [n]}$ on $\bX := ([1, X_i, X_i^2])_{i\in[n]}$ to obtain $[(\hat{\bphi}_{\bB})_{rs}, \hat{(\bphi}_{(\bB, \bC)})_{rs},  (\hat{\bphi}_{\bC})_{rs}]^T$ which, for brevity, we denote as $\hat{\bbeta}_{rs}$. Let $\tilde\bbeta_{rs}$ be the least square estimator using the response $((\bepsilon_i \bepsilon_i^T)_{rs})_{i \in [n]}$, and denote $\bbeta_{rs}$ as the true value of $\tilde\bbeta_{rs}$, that is the parameter value that minimizes $\EE(\bphi_s\bphi_r)$. Then,
\begin{align*}
    \|\sqrt{n} (\hat{\bbeta}_{rs} - \bbeta_{rs})\|_2^2 \leq \|\sqrt{n} (\tilde{\bbeta}_{rs} - \bbeta_{rs})\|_2^2 + \|\sqrt{n} (\hat{\bbeta}_{rs} - \tilde{\bbeta}_{rs})\|_2^2 \leq \mathcal{O}_p(1) + \mathcal{O}_p(1)
\end{align*}
where the former term follows from the standard results on least square estimator and Assumption \ref{ass_2}. The latter term deserves some explanation which we now provide. For ease of notation, denote $((\hat{\bepsilon}_i \hat{\bepsilon}_i^T)_{rs})_{i \in [n]}$ as $\hat{\bepsilon} \hat{\bepsilon}^T$.
\begin{align*}
    \|\sqrt{n}(\tilde{\bbeta}_{rs} - \hat{\bbeta}_{rs})\|_2^2 &= \|\sqrt{n} (\bX^T \bX)^{-1} \bX^T (\hat{\bepsilon} \hat{\bepsilon}^T -\bepsilon \bepsilon^T)\|_2^2 \\
    &\leq \|(\bX^T \bX)^{-1}\|_F^2 \cdot \|\sqrt{n} \bX^T(\hat{\bepsilon} \hat{\bepsilon}^T -\bepsilon \bepsilon^T)\|_2^2 \\
    &= \mathcal{O}_{p}(n^{-2}) \cdot \mathcal{O}_{p}(n^2) = \mathcal{O}_p(1).
\end{align*}
Adding 0 and using the triangle inequality on the latter term yields:
\begin{align*}
    \|\sqrt{n} \bX^T(\hat{\bepsilon} \hat{\bepsilon}^T -\bepsilon \bepsilon^T)\|_2^2 
    &= \|\sqrt{n} \bX^T(\hat{\bepsilon} \hat{\bepsilon}^T - \bepsilon \hat{\bepsilon}^T + \bepsilon \hat{\bepsilon}^T - \bepsilon \bepsilon^T)\|_2^2 \\
    &\leq 2 \, \|\sqrt{n} \bX^T (\hat{\bepsilon} - \bepsilon) \hat{\bepsilon}^T\|_2^2
    + 2 \, \|\sqrt{n} \bX^T \bepsilon (\hat{\bepsilon}^T - \bepsilon^T)\|_2^2,
\end{align*}
where $(\hat{\bepsilon} - \bepsilon) \hat{\bepsilon}^T$, for example, is understood to mean $((\hat{\epsilon}_i - \epsilon_i)\hat{\epsilon}_i^T)_{rs, 1\leq i \leq n}$. Consider the first term, $\sqrt{n} \bX^T (\hat{\bepsilon} - \bepsilon) \hat{\bepsilon}^T$. Its $j$th entry, for $j\in\{1,2,3\}$, is given by:
\begin{align*}
&\sqrt{n} \sum_{i=1}^{n} X_i^{j-1} \left( (L_{r1} - \hat{L}_{r1}) X_i + (L_{r2} - \hat{L}_{r2}) X_i^2 \right) \cdot ( Y_{is} - \hat{L}_{s1} X_i - \hat{L}_{s2} X_i^2 ) \\
    &= \sqrt{n} \sum_{i=1}^{n} \bigl( (L_{r1} - \hat{L}_{r1}) X_i^j Y_{is} + (L_{r2} - \hat{L}_{r2}) X_i^{j+1} Y_{is} \\
    &~~~- \hat{L}_{s1} (L_{r1} - \hat{L}_{r1}) X_i^{j+1} - \hat{L}_{s1} (L_{r2} - \hat{L}_{r2}) X_i^{j+2} \\
    &~~~- \hat{L}_{s2} (L_{r1} - \hat{L}_{r1}) X_i^{j+2} - \hat{L}_{s2} (L_{r2} - \hat{L}_{r2}) X_i^{j+3} \bigr).
\end{align*}
Squaring the above and noting the first and second terms dominate the convergence rate, we obtain: 
\begin{align*}
    n \cdot (L_{r1} - \hat{L}_{r1})^2 \cdot (\sum_{i=1}^{n} X_i^j Y_{ij})^2 = \mathcal{O}_{p}(n) \cdot \mathcal{O}_{p}(n^{-1}) \cdot \mathcal{O}_{p}(n^2) = \mathcal{O}_{p}(n^2),
\end{align*}
where we used the convergence rate from step 1 for the 2nd term and the law of large numbers for the 3rd term. The other terms are handled similarly, so we conclude that $\|\hat{\beta}_{rs} - \beta_{rs}\|_2^2 = \mathcal{O}_{p}(1/n)$. Following a similar analysis, we can show that 
\begin{align*}
    \| \sqrt{n} (\hat{\bphi}_{\bB} - \bphi_{\bB}) \|_F^2 = \mathcal{O}_p(m^2); \quad \quad \| \sqrt{n} (\hat{\bphi}_{\bC} - \bphi_{\bC}) \|_F^2 = \mathcal{O}_p(m^2).
\end{align*}

For step 4, write the eigendecomposition of the coefficient matrices from step 3 as 
\begin{align*}
    \bphi_B = \bU_B \bD_B \bU_B^T; \quad \quad \bphi_C = \bU_C \bD_C \bU_C^T,
\end{align*}
the eigendecomposition of the estimated coefficient matrices as
\begin{align*}
    \hat{\bphi}_B = \hat{\bU}_B \hat{\bD}_B \hat{\bU}_B^T; \quad \quad \hat{\bphi}_C = \hat{\bU}_C \hat{\bD}_C \hat{\bU}_C^T,
\end{align*}
and the SVD of the concatenated matrices of the 2 sets of $K$ leading eigenvectors as
\begin{align*}
    \begin{bmatrix} \bU_B & \bU_C \end{bmatrix} = \bU \bD \bV^T ; \quad \quad \begin{bmatrix} \hat{\bU}_B & \hat{\bU}_C \end{bmatrix} = \hat{\bU} \hat{\bD} \hat{\bV}^T. 
\end{align*}
Then, $\bU$ are the first $2K$ eigenvectors of $\bSigma := \begin{bmatrix} \bU_B & \bU_C \end{bmatrix} \begin{bmatrix} \bU_B^T \\ \bU_C^T \end{bmatrix}$. Similarly, $\hat{\bU}$ are the first $2K$ eigenvectors of $\hat{\bSigma} := \begin{bmatrix} \hat{\bU}_B & \hat{\bU}_C \end{bmatrix} \begin{bmatrix} \hat{\bU}_B^T \\ \hat{\bU}_C^T \end{bmatrix}$. Denote $\lambda_B$ and $\lambda_C$ as the $K$th largest eigenvalue of $\bB^T \bSigma_W \bB$ and $\bC^T \bSigma_W \bC$, respectively. Now, by a variant of the Davis-Kahan theorem (Theorem 2 in \citep{yu2015useful}), with $r=1, s=2K, d:=s-r+1=2K$, and the convergence rate obtained in step 3, $\exists$ orthogonal matrices $\bO_B, \bO_C \in \RR^{K \times K}$ s.t. 
\begin{align}\label{eq:DK phi_B}
    \| \hat{\bU}_B \bO_B - \bU_B\|_F \leq 2^{5/2} \, \|\hat{\bphi}_B - \bphi_B\|_F \cdot \left( \lambda_K(\bB^T \bSigma_W \bB) \right)^{-1} = \mathcal{O}_p(\frac{m}{\lambda_B\sqrt{n}}),
\end{align}
\begin{align}\label{eq:DK phi_C}
    \| \hat{\bU}_C \bO_C - \bU_C\|_F \leq 2^{5/2} \, \|\hat{\bphi}_C - \bphi_C\|_F \cdot \left( \lambda_K(\bC^T \bSigma_W \bC) \right)^{-1} = \mathcal{O}_p(\frac{m}{\lambda_C\sqrt{n}}),
\end{align}
where we used the fact that $\bSigma_E = \sigma^2 \bI_m$ and $\text{rank}(\bSigma_W) = K$, so that $\lambda_K(\bphi_B) - \lambda_{K+1}(\bphi_B) = \lambda_K(\bB^T \bSigma_W \bB)$. 

Since $\bP_{\bD} = \bU\bU^T$ and $\hat{\bP}_{D} = \hat{\bU} \hat{\bU}^T$, it now follows that, by another application of the Davis-Kahan Theorem, $\exists$ orthogonal matrix $\bO \in \RR^{2K \times 2K}$ s.t.
\begin{align}
    \|\hat{\bP}_D - \bP_D\|_F &= \|{\hat{\bU}}{\hat{\bU}}^T - {\bU} {\bU}^T\|_F \notag \\
    &= \|{\hat{\bU}} {\hat{\bU}}^T - {\bU} \bO^T {\hat{\bU}}^T + {\bU} \bO^T {\hat{\bU}}^T - {\bU}{\bU}^T \|_F \notag \\
    &\leq \|({\hat{\bU}} \bO - {\bU})\bO^T {\hat{\bU}}^T\|_F + \|{\bU} ({\hat{\bU}} \bO - {\bU})^T\|_F \notag \\
    &\leq 2 \, \|{\hat{\bU}} \bO - {\bU}\|_F \notag \\
    &\leq 2 \cdot 2^{3/2} \,  \| {\hat{\bSigma}} - {\bSigma}\|_F \cdot \left( \lambda_{2K}({\bSigma}) - \lambda_{2K+1}({\bSigma}) \right)^{-1} \notag \\
    &= 2^{5/2} \, \| {\hat{\bSigma}} - {\bSigma} \|_F \cdot \left( \lambda_{2K}({\bSigma}) \right)^{-1}. \label{eq:DK P_D}
\end{align}
Since the matrix $\begin{bmatrix} \bU_B & \bU_C \end{bmatrix}$ has rank $2K$ by Assumption \ref{ass_3}, $\bU_B$ and $\bU_C$ cannot be linearly dependent. Hence, for any column $v\in\RR^m$ of $\bU_B$ and $\bU_C$ (there are $2K$ of them), we have $v^T\bSigma v=v^T(\bU_B \bU_B^T + \bU_C \bU_C^T) v\geq 1$, which implies $\lambda_{2K}({\bSigma})\geq 1$. 

Finally, it suffices to upper bound $\| {\hat{\bSigma}} - {\bSigma}\|_F$. Similar to derivation of (\ref{eq:DK P_D}), we have
\begin{align*}
\| {\hat{\bSigma}} - {\bSigma}\|_F&\leq \| \bU_B \bU_B^T - \hat{\bU}_B \hat{\bU}_B^T\|_F+\| \bU_C \bU_C^T - \hat{\bU}_C \hat{\bU}_C^T\|_F=\mathcal{O}_p(\frac{m}{\lambda_B\sqrt{n}}+\frac{m}{\lambda_C\sqrt{n}}),
\end{align*}
where we use (\ref{eq:DK phi_B}) and (\ref{eq:DK phi_C}) in the last step. 

Thus, we see that 
\begin{align}\label{eq:rate P_D}
    \|\hat{\bP}_{\bD} - \bP_{\bD}\|_F = (\mathcal{O}_p(\frac{m}{\sqrt{n}}) \cdot \lambda_B^{-1} + \mathcal{O}_p(\frac{m}{\sqrt{n}}) \cdot \lambda_C^{-1}) \cdot \left(\lambda_{2K}(\bSigma)\right)^{-1}.
\end{align}
Under Assumption \ref{ass_3} and the earlier remark showing $\lambda_{2K}(\Sigma) \geq 1$, the convergence rate in (\ref{eq:rate P_D}) can be further simplified to
\begin{equation}\label{eq:rate simplified P_D}
    \|\hat{\bP}_{\bD} - \bP_{\bD}\|_F = \mathcal{O}_p(\frac{1}{\sqrt{n}}).
\end{equation}

For step 5, recall that $\bTheta^T=\bP_{\bD}^\perp\bA^T$ and $\bX$ is the data matrix in $\RR^{n\times p}$. Adding and subtracting the least squares estimator with $\bP_{D}$ known and using the triangle inequality, we obtain:
\begin{align*}
    \|\hat{\bTheta} - \bTheta\|_F^2 = \sum_{j=1}^{m} \|\hat{\bTheta}_{(j,\cdot)} - \bTheta_{(j,\cdot)}\|_2^2 &= \sum_{j=1}^{m} \|(\bX^T \bX)^{-1} \bX^T (\hat{\bP}^{\perp}_D \bY)_{(j,\cdot)}^T - (\bP^{\perp}_D \bA^T)_{(j,\cdot)}^T\|_2^2 \\
    &\leq 2\sum_{j=1}^{m} \|(\bX^T \bX)^{-1} \bX^T (\hat{\bP}^{\perp}_D \bY)_{(j,\cdot)}^T - (\bX^T \bX)^{-1} \bX^T (\bP^{\perp}_D \bY)_{(j,\cdot)}^T\|_2^2 \\
    &~~+ 2\sum_{j=1}^{m} \|(\bX^T \bX)^{-1} \bX^T (\bP^{\perp}_D \bY)_{(j,\cdot)}^T - (\bP^{\perp}_D \bA^T)_{(j,\cdot)}^T\|_2^2.
\end{align*}
We bound each of these terms in turn. The first is bounded above as: 
\begin{align*}
    &\sum_{j=1}^{m} \|(\bX^T \bX)^{-1} \bX^T (\hat{\bP}^{\perp}_D \bY)_{(j,\cdot)}^T - (\bX^T \bX)^{-1} \bX^T (\bP^{\perp}_D \bY)_{(j,\cdot)}^T\|_2^2 \\
    &\leq \|(\bX^T \bX)^{-1} \bX^T \bY^T\|_2^2 \cdot \sum_{j=1}^{m} \|(\hat{\bP}_D - \bP_D)_{(j,\cdot)}\|_2^2 \\
    &= \|(\bX^T \bX)^{-1}\|^2_2 \cdot \|\bX^T \bY^T\|_F^2 \cdot \|\hat{\bP}_{D} - \bP_D\|_F^2 \\
    &= \mathcal{O}_p(\frac{1}{n^2}) \cdot \mathcal{O}_p(m n^2) \cdot \mathcal{O}_p(\frac{1}{n}) = \mathcal{O}_p(\frac{m}{n}).
\end{align*}
For the second term, we note that $\bP^{\perp}_D \bY=\bP^{\perp}_D \bA^T\bX^T+\bP^{\perp}_D \bE$. Then,
\begin{align*}
&\sum_{j=1}^{m} \|(\bX^T \bX)^{-1} \bX^T (\bP^{\perp}_D \bY)_{(j,\cdot)}^T - (\bP^{\perp}_D \bA^T)_{(j,\cdot)}^T\|_2^2\\
&=\|(\bX^T \bX)^{-1} \bX^T \bE^T\bP^{\perp}_D\|_F^2\\
&\leq \|\bP^{\perp}_D\|_2^2 \cdot \|(\bX^T \bX)^{-1} \bX^T \bE^T\|_F^2=\mathcal{O}_p(\frac{m}{n}).
\end{align*}
Altogether, we obtain $ \|\hat{\bTheta} - \bTheta\|_F^2=\mathcal{O}_p(\frac{m}{n})$. Finally, we note that $\bA^T-\bTheta^T=\bP_{\bD}\bA^T$. By Lemma \ref{Lemma 2}, we obtain
\begin{align*}
    \|\hat{\bTheta} - \bA\|_F^2 &\leq 2 \, \|\hat{\bTheta} - \bTheta\|_F^2 + 2 \, \|\bTheta - \bA\|_F^2=\mathcal{O}_p(\frac{m}{n}+K\eta).
\end{align*}
This completes the proof.
\end{proof}

\begin{lemma}\label{Lemma 2}
Under the same assumptions in Theorem \ref{thm:thm 2}, we have
    \begin{align*}
\frac{1}{m}\|\bP_D \bA^T\|_F^2=\mathcal{O}(K\eta/m),
    \end{align*}
where $\eta=\frac{1}{Km}\|\bD \bA^T\|_F^2$.
\end{lemma}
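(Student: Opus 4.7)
The plan is to unfold the projection algebraically, reduce the Frobenius norm to a trace inner product, and then peel off the ill-conditioned factor using the pervasiveness assumption. Concretely, since $\bP_{\bD} = \bD^T(\bD\bD^T)^{-1}\bD$ is symmetric and idempotent, I can write
\begin{equation*}
\|\bP_{\bD}\bA^T\|_F^2 = \mathrm{tr}(\bA \bP_{\bD}\bA^T) = \mathrm{tr}\bigl((\bD\bD^T)^{-1}(\bD\bA^T)(\bD\bA^T)^T\bigr).
\end{equation*}
This is the crucial manipulation: it rewrites the quantity of interest as the trace of the product of $(\bD\bD^T)^{-1}$ and the positive semidefinite matrix $(\bD\bA^T)(\bD\bA^T)^T$, setting up a clean operator-norm-times-trace bound.

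Next, I would invoke the standard inequality $\mathrm{tr}(\bM_1\bM_2) \leq \|\bM_1\|_{op}\,\mathrm{tr}(\bM_2)$ valid whenever $\bM_2$ is positive semidefinite, yielding
\begin{equation*}
\|\bP_{\bD}\bA^T\|_F^2 \leq \|(\bD\bD^T)^{-1}\|_{op}\cdot \mathrm{tr}\bigl((\bD\bA^T)(\bD\bA^T)^T\bigr) = \|(\bD\bD^T)^{-1}\|_{op}\cdot \|\bD\bA^T\|_F^2.
\end{equation*}
It now remains to control $\|(\bD\bD^T)^{-1}\|_{op}$, which equals the reciprocal of the smallest eigenvalue of $\bD\bD^T$. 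Since the nonzero eigenvalues of $\bD\bD^T$ coincide with those of $\bD^T\bD$, and Assumption \ref{ass_3} guarantees that the $(p+1)K$th largest eigenvalue of $\bD^T\bD$ satisfies $\lambda_D \asymp m$, we get $\|(\bD\bD^T)^{-1}\|_{op} \asymp m^{-1}$.

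Combining these pieces gives $\|\bP_{\bD}\bA^T\|_F^2 \lesssim \tfrac{1}{m}\|\bD\bA^T\|_F^2 = K\eta$, so dividing by $m$ produces the claim $\tfrac{1}{m}\|\bP_{\bD}\bA^T\|_F^2 = \mathcal{O}(K\eta/m)$. There is really no substantive obstacle here: the only subtle point is to recognize that the singular values of $\bD$ are characterized via $\bD^T\bD$ (which is how Assumption \ref{ass_3} is stated, as an $m\times m$ matrix of rank $(p+1)K$), but that these are identical to the eigenvalues of the smaller matrix $\bD\bD^T$ whose inverse actually appears in the projection formula. Once that identification is made, the entire argument is a single trace inequality together with Assumption \ref{ass_3}.
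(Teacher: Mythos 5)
Your proposal is correct and follows essentially the same route as the paper: both arguments reduce $\|\bP_{\bD}\bA^T\|_F^2$ to the bound $\bigl(\lambda_{\min}(\bD\bD^T)\bigr)^{-1}\|\bD\bA^T\|_F^2$ (the paper via $\|\bD^T(\bD\bD^T)^{-1}\|_{op}$ and submultiplicativity, you via the trace inequality), and then identify $\lambda_{\min}(\bD\bD^T)$ with $\lambda_D$, the $(p+1)K$th eigenvalue of $\bD^T\bD$, which is $\asymp m$ by Assumption \ref{ass_3}. The difference is purely cosmetic, so no further comparison is needed.
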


\begin{proof}
    Recall $\bD^T = \begin{bmatrix} \bB^T & \bC_1^T&...&\bC^T_p \end{bmatrix} \in \RR^{m\times (p+1) K}$. Then 
    \begin{align*}
        \|\bP_D \bA^T\|_F^2 &= \|\bD^T (\bD\bD^T)^{-1} \bD \bA^T\|_F^2 \\
        &\leq \|\bD^T (\bD\bD^T)^{-1}\|_{2}^2 \cdot \|\bD \bA^T\|_F^2 =\left(\lambda_{\min}(\bD\bD^T)\right)^{-1}  \cdot \|\bD \bA^T\|_F^2.
    \end{align*}
    Since $\bD$ has rank $(p+1)K$, the smallest eigenvalue of $\bD\bD^T$ is equal to $\lambda_D$, the $(p+1)K$th eigenvalue of $\bD^T\bD=\bB^T\bB+\sum_{j=1}^p\bC_j^T\bC_j$. By Assumption \ref{ass_3}, we know $\lambda_D\asymp m$. Thus, $\frac{1}{m}\|\bP_D \bA^T\|_F^2=\mathcal{O}(K\eta/m)$, where $\eta=\frac{1}{Km}\|\bD \bA^T\|_F^2$. 
\end{proof}

\section{Proof of Theorem \ref{thm:thm 2}}

\begin{proof}
    Like the proof of Theorem \ref{thm:thm 1}, we consider the case $p=1$. The only part of our estimation procedure that is changed under the heteroscedastic setting is the upper bound on the difference between our estimated projection matrix and the true projection matrix. In particular, 
    \begin{align}
        \|\tilde{\bP_D}-\bP_D\|_F \leq 2^{5/2} \, \|\tilde{\bSigma} - \bSigma\|_F \cdot \left(\lambda_{2K}(\bSigma)\right)^{-1} 
    \end{align}
    
    where 
    \begin{align*}
        \tilde{\bSigma} = \begin{bmatrix} \tilde{\bU}_B & \hat \bU_C \end{bmatrix} \begin{bmatrix} \tilde{\bU}_B^T \\ \hat \bU_C^T \end{bmatrix}; \quad \quad \bSigma = \begin{bmatrix} \bU_B & \bU_C \end{bmatrix} \begin{bmatrix} \bU_B^T \\ \bU_C^T \end{bmatrix}.
    \end{align*}
    
    Recall that $\bU_B$ are the first $K$ eigenvectors of $\bM=\bB^T \bSigma_W \bB$. Then $\hat{\bphi}_B$ can be decomposed into $\hat{\bphi}_B=\bM+\bZ_1+\bZ_2$, where $\bZ_1=\hat{\bphi}_B-\bphi_B$ with $\bphi_B=\bB^T \bSigma_W \bB+\Cov(\bE)$ and $\bZ_2=\Cov(\bE)$. The only term we need to consider in this different setting is 
    \begin{align}
        \|\tilde{\bU}_B \tilde{\bU}_B^T - \bU_B \bU_B^T\|_F &\leq 2 \, \|\sin \Theta(\tilde{\bU}_B, {\bU}_B)\|_F \notag\\
        &~\lesssim \frac{\|\Gamma(\bZ_2)\|_F+\|\bZ_1\|_F}{\lambda_B} \wedge \sqrt{K} \notag \\
        &~\lesssim (\frac{m}{\lambda_B \sqrt{n}}+\frac{\|\Gamma(\Cov(\bE))\|_F}{\lambda_B}) \wedge \sqrt{K}  \notag \\
        &~= \mathcal{O}_p(\frac{1}{\sqrt{n}}+\frac{\|\Gamma(\Cov(\bE))\|_F}{m}), \label{eq:DK P_D hetero}
    \end{align}
    where we used Theorem 13 in \cite{bing2022adaptive} with Assumption \ref{ass_4} to bound $\|\sin \Theta(\tilde{\bU}_B, {\bU}_B)\|_F$ and the same argument as in (\ref{eq:rate P_D}) of the proof of Theorem \ref{thm:thm 1} to upper bound $\|\bZ_1\|_F$. The rest of the proof follows the same line as the proof of Theorem \ref{thm:thm 1}. We omit the details. 
\end{proof}

\end{document}